\newtheorem{theorem}{Theorem}
\newtheorem{lemma}{Lemma}
\newtheorem{definition}[lemma]{Definition}
\newcommand{\wt}[1]{\widetilde{#1}}
\newcommand{\trinom}[4]{\medmath{\binom{#1}{\,#2,\; #3,\; #4\,}}}
\newcommand{\Q}{1-\ell\left( \sqrt{r} h +  s +  \sqrt{r} h^{-1}\right)}
\newcommand{\mpmon}{h + s + r h^{-1}}
\newcommand{\mpsym}{\sqrt{r} h + s + \sqrt{r} h^{-1}}
\renewcommand{\d}[0]{\mathrm{d}}
\newcommand{\norm}[1]{\left\| #1 \right\|}
\newcommand{\matr}[4]{\left( \begin{array}{ll} #1 & #2 \\ #3 & #4 \end{array}\right)}
\newcommand{\ip}[1]{\left\langle #1 \right\rangle}
\newcommand{\vect}[2]{\left(\begin{array}{c} #1 \\ #2 \end{array}\right)}
\begin{document}

\begin{frontmatter}


\author{Patrick Waters\corref{cor1}\fnref{label2}}
\ead{pwaters@math.arizona.edu}
\fntext[label2]{University of Arizona}
\cortext[cor1]{Corresponding author}

\title{Solution of string equations for asymmetric potentials}



\begin{abstract}
We consider the large $N$ expansion of the partition function for the Hermitian one-matrix model.  It is well known that the coefficients of this expansion are generating functions $F^{(g)}$ for a certain kind of graph embedded in a Riemann surface.  Other authors have made a simplifying assumption that the potential $V$ is an even function.  We present a method for computing $F^{(g)}$ in the case that $V$ is not an even function.  Our method is based on the string equations, and yields ``valence independent'' formulas which do not depend explicitly on the potential.  We introduce a family of differential operators, the ``string polynomials'', which make clear the valence independent nature of the string equations.
\end{abstract}

\begin{keyword}
map \sep random matrix \sep topological expansion \sep string equation


\end{keyword}

\end{frontmatter}


\section{Introduction} \label{Intro}
We consider a partition function defined by 
\begin{align}
Z_{n,N}(t)=&  \int_{\mathbb{R}^n}\exp\left[-N\sum_{i=1}^n  V(\lambda_i ;t)\right]\prod_{1\leq i<j\leq n} (\lambda_i-\lambda_j)^2 \, \d^n \lambda . \label{ZDef1}
\end{align}
We assume that the potential function $V$ is a polynomial:
\begin{align}
V(\lambda;t) =& \frac{1}{2} \lambda^2 + \sum_{j=1}^{d} t_j \lambda^j . \label{VDef1}
\end{align}
For the integral in (\ref{ZDef1}) to converge, it is necessary that the polynomial degree $d$ is even, and that the parameter $t_d$ is such that the real part of $t_{d}$ is positive.  By replacing the domain of integration $\mathbb{R}^n$ with $\Gamma^n$, where $\Gamma$ is an appropriately chosen contour in the complex plane, it is possible to remove these restrictions \cite{Bleher10}.

The partition function in (\ref{ZDef1}) is the normalization constant for the induced measure on eigenvalues arising from the following probability measure on $n\times n$ Hermitian matrices:
\begin{align}
dP_{n}(M) =& \frac{1}{\wt{Z}_{n,N}(t)} \exp \left[-N V(M;t)\right]\, \d M. \label{MatrixMeasure1}
\end{align}
The factor $\d M$ is Lebesgue product measure $n^2$ real degrees of freedom of $M$. 

We are interested in the partition function (\ref{ZDef1}) because it has a topological expansion whose coefficients are generating functions for a certain kind of embedded graph call a \emph{map}.  Letting $n,N \rightarrow \infty$ with the ratio $n/N=x$ held fixed, we have
\begin{align}
\log \frac{Z_{n,N}(t)}{Z_{n,N}(0)} \sim & \sum_{g \geq 0} F^{(h)}(x,t)N^{2-2g}, \label{TopExp1} \\
F^{(g)} =& \sum_{\Gamma \in \substack{  \text{genus $g$ maps}  \\  \text{vert's of valence $\leq d$}   }} x^{\text{Faces}(\Gamma)}\prod_{j=1}^{d} \frac{(-t_j)^{V_j(\Gamma)}}{\scriptstyle{V_j(\Gamma)!}}. \label{GenFunc1}
\end{align}
Here $V_j(\Gamma)$ is the number of $j$-valent vertices of $\Gamma$; that is, vertices at which exactly $j$ edges meet.
We use $\sim$ to mean that for any $g\geq 0$, truncating the series after terms of order $N^{2-2g}$ results in an approximation with error $O(N^{1-2g})$ as $n,N\rightarrow\infty$ with $n/N=x$.

A \emph{map of genus $g$} is an equivalence class of labeled graphs embedded in genus $g$ compact connected oriented surface.  
Two embedded graphs are equivalent if an orientation preserving homeomorphism of the surface takes the vertices and edges of one to those of the other, preserving labels.  A map is equipped with the following labels: each of the map's $k_j$ vertices of each valence $j$ has a unique label $1,\ldots ,k_j$; and also for each vertex, one incident edge is marked with an arrow pointing away from that vertex.  This marking of edges can be represented as a function $f$ from the vertex set to the edge set such that each vertex $V$ is incident $f(V)$.

The expansion (\ref{TopExp1}) was used by Bessis, Itzykson and Zuber \cite{BIZ}.  Ercolani and McLaughlin \cite{EMcL03} gave a rigorous proof that an expansion of the form (\ref{TopExp1}) exists, and proved analytical properties of the expansion.  In particular, expansions of derivatives of $\log Z_{n,N}(t)/Z_{n,N}(0)$ with respect to $g$ can be computed by termwise differentiation of the series (\ref{TopExp1}).  The same is true for differentiation with respect to $x$, but for more complicated reasons; see section 4.1.3 of \cite{PatricksDissertation} for a proof that this property is equivalent to the string equations.

This paper addresses the problem of computing the generating functions $F^{(g)}$, a problem which has been studied by several authors \cite{BIZ,BeyondSphericalLimit,Shirokura95,Eynard04,EMcLP08,Ercolani11,EP11,Bleher10,AAM11,PatricksDissertation}.   These articles give a variety of methods for computing $F^{(g)}$, and yield formulas given in terms of a variety of choices of variables.  In this article we focus on \emph{valence independent} formulas.  Formulas of this kind allow derivatives with respect to $x$ but forbid explicit dependence on the potential $V$ or its parameters $t_j$.  For example we find that
\begin{align}
F^{(1)} =& -\frac{1}{12} \log \frac{z}{x} +\frac{1}{24} \log \left( (u')^2 z -(z')^2 \right), \label{F1Formula}
\end{align}
where $z$ and $u$ are defined in equations (\ref{rnExpand}-\ref{snExpand}) and the notation $'$ indicates $\partial_x$.  Other authors assume that the potential $V$ is an even function; this assumption significantly reduces the difficulty of computing $F^{(g)}$, for example with this assumption we have $u=0$.  Thus our formula (\ref{F1Formula}) is new and extends formulas given by other authors to the case of an asymmetric potential.  The valence independent approach was first used by Shirokura \cite{Shirokura95}.

For the purpose of computing the generating functions $F^{(g)}$, it is difficult to work with the partition function directly.  Szego's relation (\ref{SzegoRel1}) and the Hirota-Szego relation (\ref{SzegoRel2}) give two fundamentally different relations between the partition function and recurrence coefficients for a family of orthogonal polynomials:
\begin{align}
\frac{Z_{n+1,N} Z_{n-1}}{Z_{n,N}^2} =& \frac{n+1}{n} r_{n,N}. \label{SzegoRel1} \\
r_{n,N} \sim & \frac{1}{N^2}\frac{\partial^2}{\partial t_{1}^2} \log \frac{Z_{n,N}(t)}{Z_{n,N}(0)}   \label{SzegoRel2}
\end{align}
Strategies for computing $F^{(g)}$ may involve using string equations to compute higher order asymptotics of the recurrence coefficients $r_{n,N}$, then lifting this information to the level of the partition function $Z_{n,N}$.  For example \cite{BIZ} uses (\ref{SzegoRel1}), \cite{EMcLP08} uses (\ref{SzegoRel2}) and \cite{AAM11} uses Bleher-Its deformation.  We will use (\ref{SzegoRel1}); the mathematical rigor of our use of this formula was established in section 4.1.4 of \cite{PatricksDissertation}.  Strategies for computing $F^{(g)}$ which are based on loop equations \cite{BeyondSphericalLimit,Eynard04} avoid the recurrence coefficients $r_n$, but involve computing higher order asymptotics of random matrix correlators.  It is also possible to compute $F^{(g)}$ directly from the Riemann-Hilbert problem for orthogonal polynomials \cite{PatricksDissertation}.

Let us now define the coefficients $r_{n}$ and explain the string equations which govern them.  The monic orthogonal polynomials $p_{n,N}$ arise from the following (non-Hermitean) inner product and orthogonality relation:
\begin{align}
\ip{F(\lambda),G(\lambda)}=& \int_{\mathbb{R}} F(\lambda) G(\lambda)\exp\left(-N V(\lambda;t)\right)\, d\lambda . \label{OrthRel1} \\
\ip{p_{n,N}(\lambda),\lambda^m}=&0 &\text{ for all }m<n. \label{OrthRel2}
\end{align}
Thus the polynomials $p_n$ can be constructed by the Gram-Schmidt process.  As is well known \cite{DeiftOPs}, the polynomials satisfy a three term recurrence:
\begin{align}
\lambda p_{n,N}(\lambda) =& p_{n+1,N}(\lambda) + s_{n,N} p_{n,N}(\lambda) + r_{n,N} p_{n-1,N}(\lambda). \label{ThreeTermRec1}
\end{align}
This defines the recurrence coefficients $r_{n}$.
The three term recurrence (\ref{ThreeTermRec1}) can be expressed in terms of a Jacobi matrix $L$:
\begin{align} \label{jacobiMatrix}
L=\left( \begin{array}{ccccc}
s_0 & 1 & 0 & 0 &  \\
r_1 & s_1 & 1 & 0 &  \\
0 & r_2 & s_2 & 1 & \\
 & & \ddots & \ddots & \ddots
 \end{array}\right)
 \end{align}
Let $p= ( p_n(\lambda))_{n\geq 0}$ be the column vector of all orthogonal polynomials for the potential $V$; then the three term recurrence can be encoded as a matrix equation
\begin{align}
\lambda p =& L p. \label{matrixRecurrence}
\end{align}
Following Blecher and Dea\~{n}o \cite{Bleher10}, we use the string equations
\begin{align}		
0 =&  V'(L)_{n,n},	\label{UnDiff1}		\\
\frac{n}{N} =&  V'(L)_{n,n-1}.	\label{UnDiff2}		
\end{align} 
A differenced form of these equations was used in \cite{EMcLP08}.
Equation (\ref{UnDiff1}) can be derived as follows.  Integrating by parts we find that
\begin{align*}		
0 =& \int p'_n(\lambda) p_n(\lambda)\, e^{-NV(\lambda)}\,d\lambda	\\
=& N \int p_n(\lambda)^2  V'(\lambda)\, e^{-NV(\lambda)}\,d\lambda 	\\
=& N \norm{p_n}^2 V'(L)_{n,n}.
\end{align*}
An analogous calculation, but starting with $n \norm{p_{n-1}}^2 = \int p'_{n}(\lambda) p_{n-1}(\lambda)e^{-NV(\lambda)} \, d\lambda$ establishes (\ref{UnDiff2}).

We now explain how the continuum limit of string equations can be used to compute the generating functions $F^{(g)}$.  The following asymptotic expansions for recurrence coefficients with shifted indices was proven in \cite{EMcLP08}:
\begin{align}
r_{n+k,N} \sim & \exp \left(k N^{-1}\partial_x\right) \sum_{g \geq 0} z_g(x,t)N^{-2g} \label{rnExpand}, \\
s_{n+k,N} \sim & \exp \left(k N^{-1}\partial_x\right) \sum_{g \geq 0} u_g (x,t)N^{-g} \label{snExpand}.
\end{align}
Thus we have defined $z_g$ and $u_g$ to be the asymptotic coefficients in the expansions (\ref{rnExpand}-\ref{snExpand}).  We will use the abbreviations $z=z_0$ and $u=u_0$.  Inserting these expansion in the string equations gives the continuum limit of the string equations.  By extracting terms at order $N^{-2g}$ from the continuum limit of string equations, one can solve for $z_g$ and $u_{2g}$ in lower order terms.  Inserting the expansions (\ref{rnExpand}-\ref{snExpand}) and the topological expansion (\ref{TopExp1}) in the Szego relation (\ref{SzegoRel1}), one may solve for $F^{(g)}$ in terms of the variables $z_{g'},\, g'\leq g$:
\begin{align}
F^{(g)} =& \sum_{m=0}^{g} \frac{(1-2m)B_{2m}}{(2m)!} \partial_{x}^{2m-2} \wt{z}_{g-m} \label{AsymptSzegoRel1}
\end{align}
The constants $B_{2m}$ are Bernoulli numbers, and the cumulants $\wt{z}_g$ are defined by
\begin{align}
&\sum_{g \geq 0} \wt{z}_g N^{-2g}  \sim  \log \sum_{g \geq 0}x^{-1}z_g N^{-2g}. \\
&\begin{array}{ *4{>{\displaystyle}c |}  >{\displaystyle}c }
g & 0 & 1 & 2 & 3    \\ \hline
& & & & \\[-2.1ex]
\wt{z}_g & \log \frac{z}{x} & \frac{z_1}{z} & \frac{z_2}{z}-\frac{z_{1}^2}{2z^2} & \frac{z_3}{z}-\frac{z_2 z_1}{z^2}+\frac{z_{1}^3}{3z^3} 
\end{array}
\end{align}
In \cite{BIZ}, formula (\ref{AsymptSzegoRel1}) was obtained using the Euler-MacLaurin formula.  As was noted in \cite{AAM11}, further justification is required to make this approach rigorous.  A rigorous proof of (\ref{AsymptSzegoRel1}) has been given in section 4.2.1 of \cite{PatricksDissertation}.

The rest of our paper is organized as follows.  In section 2 we prove a theorem concerning the structure of the string equations; theorem \ref{str polyn thm} describes the terms arising in the string equations as a sum of differential operators (which we call string polynomials) acting on a residue.  The reason that our method results in valence independent formulas is essentially that the string polynomials do not depend on the potential.  In section 3 we demonstrate our method by computing $F^{(1)}$.  In section 4 we prove a structure theorem for  valence independent formulas for $u_g$ and $z_g$.  To compute $F^{(g)}$ using formula (\ref{AsymptSzegoRel1}) it is necessary to evaluate an indefinite integral with respect to $x$.  It turns out that in the cases $g=1,2$ this integral can be computed in closed form, but we have no proof that this happens for all $g>0$.

\section{String polynomials} \label{solving string}

%

Let us sketch an outline of this section.  In \ref{str polyn sect} we state theorem \ref{str polyn thm}, asserting that the $x$-derivative structure of the string equations can be expressed in terms of a family of valence independent ``string polynomials''.  We then reduce theorem \ref{str polyn thm} to lemma \ref{alt str polyn thm}, which essentially allows us to work with each time variable separately.  In sections \ref{Motzkin sect} - \ref{string general case sect} we prove lemma \ref{alt str polyn thm}.  Our proof gives a formula for the string polynomials, which is stated in lemma \ref{str polyn formula}.

\subsection{String polynomials} \label{str polyn sect}
In this section we will use the following notations
\begin{align}
r_{n,N} \sim& r=\sum_{h\geq 0} z_g N^{-2g}, \\
s_{n,N} \sim & s = \sum_{g \geq 0} u_g N^{-g}.
\end{align}
That is, we are defining $r$ and $s$ to be the formal series for the asymptotic expansions of $r_{n,N}$ and $s_{n,N}$.  Note that $\sim$ has the same meaning as in (\ref{GenFunc1}); let us give two examples marking the distinction between $\sim$ and $=$.  First $\exp(-N)\sim  \sum_{g\geq 0} 0 \times N^{-g}=0$, where the sum converges to the ``wrong'' limit; and second $\int_{0}^\infty (1+\tau/N)^{-1}\exp(-\tau)\, \d \tau \sim \sum_{g\geq 0} (-1)^g g! N^{-g}$, where the sum diverges (this example is the Stieltjes function, see \cite{BoydHyper}).  

We will use the notation $\sqrt{r}$ for the asymptotic expansion of $\sqrt{r_n}$, where we choose the branch of the square root that is positive when the potential is real.  The square root is natural because if one works with orthonormal polynomials $P_n$ (with positive leading coefficients) instead of monic orthogonal polynomials $p_n$, the three term recurrence becomes $\lambda P_n =\sqrt{r_{n+1}}P_{n+1} +s_n P_{n} +\sqrt{r_n}P_{n-1}$.  Here $r_n$ and $s_n$ are the same coefficients as in the case of monic polynomials.

The variables $r$ and $s$ allow us to analyze the $x$-derivative structure of the string equations before stratifying terms by genus.  We will use integer partitions as multi-indices to describe differentiation:
\begin{align}
\partial_{x}^{\lambda} f(x) := \prod_{i=1}^{\text{len}(\lambda)} \partial_{x}^{\lambda_i}f(x).
\end{align}
Furthermore, since $x$-derivatives in (\ref{rnExpand}-\ref{snExpand}) are always accompanied by powers of $N^{-1}$ it natural to differentiate with respect to $n=xN$, that is $\partial_{n}f(x):=N^{-1}f'(x)$.  Thus for example $\partial_{n}^{(1+2)}S=N^{-3}S'S''$.

We will use the notation $[h^p]$ for the $h^p$ coefficient of a Laurent series in powers of $h$.  That is,
\begin{align}
[h^p]G(h) =& \frac{1}{2\pi i} \oint_0 G(h)h^{-p} \frac{\d h}{h}. \nonumber
\end{align}

In section \ref{string general case sect}, we will prove the following theorem.
\begin{theorem}[Existence of string polynomials] \label{str polyn thm}
There are polynomials (with coefficients in the ring $\mathbb{Q}[r]$) $P^{(a)}_{\lambda,\eta}(\partial_r,\partial_s)$ indexed by integer partitions $\lambda,\eta$ such that the continuum string hierarchy can be expressed as
\begin{align}
0\sim & \sum_{\lambda,\eta}  \partial_{n}^{\lambda}s\,  \partial_{n}^{\eta} r \,  P^{(a)}_{\lambda,\eta}(\partial_{s}, \partial_{r}) \cdot [h^0] V'( h + s + r h^{-1}) \\
x \sim & \sum_{\lambda,\eta}  \partial_{n}^{\lambda}s\, \partial_{n}^{\eta} r\, P^{(b)}_{\lambda,\eta}(\partial_{s}, \partial_{r}) \cdot [h^0] V'( h + s + r h^{-1})
\end{align}
The string polynomials $P^{(a \text{ or }b)}_{\lambda,\eta}$ do not depend on the potential $V$.  Furthermore, they can be chosen to have polynomial degree at most one in $r$, and are uniquely determined by this condition.
\end{theorem}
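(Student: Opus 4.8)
The plan is to analyze the string equations \eqref{UnDiff1}--\eqref{UnDiff2} by making the continuum substitution $r_{n+k}\to \exp(kN^{-1}\partial_x)r$, $s_{n+k}\to\exp(kN^{-1}\partial_x)s$ and tracking the resulting $x$-derivative structure. The starting point is the observation that $V'(L)_{n,n}$ and $V'(L)_{n,n-1}$ are finite sums of products of recurrence coefficients $r_{n+k}$, $s_{n+k}$ with shifted indices: indeed, $(L^m)_{n,n}$ counts weighted lattice paths (Motzkin-type paths) from level $n$ to level $n$ of length $m$, with a factor $r_{n+k}$ for each down-step at level $n+k$ and $s_{n+k}$ for each level-step, and similarly for the $(n,n-1)$ entry. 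Expanding each shifted coefficient as $\exp(kN^{-1}\partial_x)$ acting on $r$ or $s$ and collecting, one sees that the $n$-derivative structure is entirely generated by expressions of the form $\partial_n^\lambda s\cdot\partial_n^\eta r$ multiplied by polynomial (in the $r,s$ and their $\partial_r,\partial_s$-derivatives) coefficients. The key bookkeeping device is that the weighted path count, when the path visits level $n+k$ with multiplicity encoded by the shift, is exactly captured by the residue $[h^0]V'(h+s+rh^{-1})$: the variable $h$ tracks the net level change, $s$ the level-steps, $r$ the down-steps, and $rh^{-1}$ versus $h$ distinguishes up from down. So after reduction to \textbf{Lemma \ref{alt str polyn thm}} (which, as the section outline states, lets us treat one time variable $t_j$, i.e.\ one monomial $V(\lambda)=\lambda^j$, at a time) the entire content is a combinatorial identity about Motzkin paths.

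For the existence half, I would proceed as in Sections \ref{Motzkin sect}--\ref{string general case sect}: first treat the case of a single monomial potential, set up the generating function for weighted Motzkin excursions, perform the Taylor expansion in $N^{-1}$ of the shifted coefficients, and read off directly that the coefficient of each monomial $\partial_n^\lambda s\cdot\partial_n^\eta r$ is a differential operator in $\partial_s,\partial_r$ applied to $[h^0]V'(h+s+rh^{-1})$, with coefficients in $\mathbb{Q}[r]$. This gives \textbf{Lemma \ref{str polyn formula}}, an explicit formula for the $P^{(a\text{ or }b)}_{\lambda,\eta}$; summing over monomials recovers the full potential by linearity, proving \textbf{Theorem \ref{str polyn thm}}'s first assertion. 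The fact that the $P_{\lambda,\eta}$ are independent of $V$ is then automatic, since the formula produced never refers to the $t_j$.

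The remaining claim — that the string polynomials can be chosen of degree at most one in $r$, and that this choice is unique — is where the real work lies, and I expect it to be the main obstacle. The point is that the map $Q(\partial_s,\partial_r)\mapsto Q(\partial_s,\partial_r)\cdot[h^0]V'(h+s+rh^{-1})$ has a kernel: because $[h^0]$ annihilates total $h$-derivatives and because $h+s+rh^{-1}$ is symmetric under $h\leftrightarrow r/h$ (equivalently, one may rescale $h\mapsto \sqrt r h$ to the symmetric form $\sqrt r h+s+\sqrt r h^{-1}$), there are nontrivial differential relations among the residues. Concretely, one has identities expressing $r^2\partial_r^2([h^0]G)$-type terms in terms of lower $r$-degree ones. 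I would (i) identify a spanning set of such relations — the natural source is applying $\partial_r$ under the residue and using $\partial_r(h+s+rh^{-1})=h^{-1}$ together with the identity $[h^0](\partial_h(\cdots))=0$ to reduce $h^{-1}$ factors — then (ii) show these relations suffice to bring any $P_{\lambda,\eta}$ down to $r$-degree $\le 1$, and (iii) prove the reduced form is unique by checking that no nonzero operator of $r$-degree $\le 1$ lies in the kernel. Step (iii) is essentially a statement that $[h^0]V'(\cdots)$ and its first $\partial_r$-derivative are ``generic enough'' as the potential varies — i.e.\ one tests against a basis of monomial potentials $V'=\lambda^{j-1}$, reducing to the linear independence over $\mathbb{Q}[r]$ of the functions $j\mapsto[h^0](h+s+rh^{-1})^{j-1}$ and their $\partial_r$-images, which can be verified from the explicit trinomial-coefficient formula for these residues. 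Assembling (i)--(iii) yields both the existence of the degree-$\le 1$ normalization and its uniqueness, completing the theorem.
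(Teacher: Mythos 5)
Your overall route is the paper's route (reduce to the one-monomial statement, Lemma \ref{alt str polyn thm}, via Motzkin-path bookkeeping, then normalize using relations among residues), but the existence half contains a genuine gap exactly where the paper does its real work. What must be proved is not that $(L^{J-1})_{n,n}$ expands into terms $\partial_n^{\lambda}s\,\partial_n^{\eta}r$ with polynomial coefficients — that is immediate and merely defines the $J$-dependent quantities $\widetilde{P}^{(a)}_{\lambda,\eta,J}(s,r)$ — but that these coefficients are obtained by applying a \emph{single} operator $P^{(a)}_{\lambda,\eta}(\partial_s,\partial_r)$, independent of $J$ (hence of $V$), to $[h^0](h+s+rh^{-1})^{J-1}$. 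Your bookkeeping remark (that $h$ tracks net height, $s$ flat steps, $r$ down steps) only reproduces the unmarked, leading-order path count. The coefficient of $\partial_n^{\lambda}s\,\partial_n^{\eta}r$ is a sum over Motzkin paths with marked steps weighted by powers $p(i)^{m}$ of the height at each marked step, and it is not automatic that such height-weighted sums equal a fixed operator applied to the residue. The paper handles precisely this: it introduces dummy variables $\ell$ and $h_\alpha$ for the segments between marked steps, encodes the height weights as operators $(h_\alpha\partial_{h_\alpha})^{m}$ acting on products of resolvents $\bigl(1-\ell(\sqrt{r}\,h_\alpha+s+\sqrt{r}\,h_\alpha^{-1})\bigr)^{-1}$, and then uses Lemma \ref{Wr lemma} together with the path raising/lowering, derivative-swapping and integration-by-parts identities of Appendix \ref{MotzkinIdentSect} to trade the $h$-derivatives for $\partial_s,\partial_{\sqrt{r}}$ before extracting the $\ell$- and $h$-coefficients; that is how Lemma \ref{str polyn formula} and the $J$-independence arise. ``Read off directly'' skips all of this, and without it you have neither a formula nor the claimed independence of $J$ and $V$.

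On the normalization and uniqueness: reducing modulo kernel relations is the right idea, and your proposed source of relations (differentiating under the residue and using $[h^0]\partial_h(\cdot)=0$) is in the spirit of the appendix lemmas; the relevant relation is the derivative-swapping identity, i.e.\ the left ideal generated by $r\partial_r^2+\partial_r-\partial_s^2$. However, your step (iii) fails as stated: that operator itself has degree one in $r$ and annihilates every residue $[h^0]G(\sqrt{r}h+s+\sqrt{r}h^{-1})$, so ``no nonzero operator of $r$-degree $\le 1$ lies in the kernel'' is false and the normal form is not unique under that reading. The normalization that works — the one in Lemma \ref{str polyn formula} and visible in the table — is degree at most one in $\partial_r$: eliminating $\partial_r^k$ for $k\ge 2$ via the relation gives a well-defined normal form, and uniqueness then reduces to showing that no nonzero operator of $\partial_r$-degree $\le 1$ kills all $[h^0](h+s+rh^{-1})^{J-1}$, which your linear-independence test against monomial potentials would indeed address. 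Note also that your assessment of where the difficulty lies is inverted relative to the paper: the reduction to $\partial_r$-degree one is a one-line application of the derivative-swapping lemma, whereas the existence and $J$-independence of the operators is the bulk of the argument.
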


We now reduce theorem \ref{str polyn thm} to a lemma which will allow us to work with one power of the Jacobi operator $L$ at a time.  Define the polynomials $\wt{P}^{(a)}_{\lambda,\eta,J},\wt{P}^{(b)}_{\lambda,\eta,J}$ by
\begin{align}
(L^{J-1})_{n,n} \sim &  \sum_{\lambda,\eta} \wt{P}_{\lambda,\eta,J}^{(a)}\left(s, r \right)    \partial_{n}^{\lambda}s\,  \partial_{n}^{\eta} r, \label{qheic103}\\
(L^{J-1})_{n,n-1} \sim & \sum_{\lambda,\eta} \wt{P}_{\lambda,\eta,J }^{(b)}\left(s,r \right)   \partial_{n}^{\lambda}s\, \partial_{n}^{\eta} r .
\end{align}
Unlike the polynomials $P^{(a \text{ or }b)}_{\lambda,\eta}$, it is clear that the polynomials $\wt{P}^{(a \text{ or }b)}_{\lambda,\eta,J}$ exist.  For example, if we let $J=3$, then (\ref{qheic103}) becomes
\begin{align}
L^{2}_{n,n}=& r_{n+1} + s_{n}^2 +r_{n} \nonumber \\ 
\sim & \left( s^2+2r \right) +N^{-1}\left( 2s s'+ r'\right) \nonumber \\ &+N^{-2}\left( (s')^2+ss'' +\frac{1}{2} r''\right)+ \ldots .
\end{align}
The first several polynomials $\wt{P}^{(a)}_{\lambda,\eta,3}(s,r)$ can be immediately read off from the above example.  For example $\wt{P}^{(a)}_{1,\phi,3}(s,r)=2 s$, and $\wt{P}^{(a)}_{1+1,\phi,3}(s,r)=1$.
\begin{lemma} \label{alt str polyn thm}
For all $\lambda, \eta$ there exist polynomials $P^{(a)}_{\lambda,\eta},P^{(b)}_{\lambda,\eta}$, such that the following identity holds for all $J$:
\begin{align}
\wt{P}^{(a)}_{\lambda,\eta,J}(s,r) =& P^{(a)}_{\lambda,\eta}(\partial_s,\partial_r) \cdot [h^0](h + s+ r h^{-1})^{J-1} \label{univ str poly 125}, \\
\wt{P}^{(b)}_{\lambda,\eta,J}(s,r) =& P^{(b)}_{\lambda,\eta}(\partial_s,\partial_r) \cdot [h^{-1}](h + s+ r h^{-1})^{J-1}. \label{univ str poly 126}
\end{align}
\end{lemma}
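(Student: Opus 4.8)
The plan is to extract the coefficients $\wt P^{(a)}_{\lambda,\eta,J}(s,r)$ from an explicit combinatorial description of $(L^{J-1})_{n,n}$ and $(L^{J-1})_{n,n-1}$, and then recognize that description as the result of a differential operator acting on the generating function $[h^0](h+s+rh^{-1})^{J-1}$. The starting point is the classical fact that $(L^{J-1})_{n,n}$ is a sum over closed lattice paths of length $J-1$ starting and ending at $n$, where each up-step at level $k$ contributes a factor $1$, each level-step at $k$ contributes $s_k$, and each down-step at $k$ contributes $r_k$; the entry $(L^{J-1})_{n,n-1}$ is the analogous sum over paths from $n$ to $n-1$. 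These are Motzkin-type paths (indeed the excerpt signals this with the ``Motzkin sect'' reference). The purely combinatorial ``diagonal constant'' part — ignoring the shifts in the indices $n\pm j$ — is exactly the constant term $[h^0](h+s+rh^{-1})^{J-1}$ (resp. $[h^{-1}]$ for the subdiagonal), since $h$ tracks up-steps, $h^{-1}$ tracks down-steps weighted by $r$, and a closed path must balance them.

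Next I would handle the index shifts. In $(L^{J-1})_{n,n}$ the weight attached to a step at level $n+m$ is $s_{n+m}$ or $r_{n+m}$, and under the asymptotic expansions (\ref{rnExpand}--\ref{snExpand}) we have $r_{n+m}\sim \exp(mN^{-1}\partial_x)\,r = \exp(m\partial_n)\,r$ and similarly for $s$. So the path sum becomes a sum over Motzkin paths of products of shifted copies of $r$ and $s$, and expanding each $\exp(m\partial_n)$ produces, at each order in $N^{-1}$, the monomials $\partial_n^\lambda s\,\partial_n^\eta r$ with coefficients that are integers times the polynomial data recording which levels the steps occupy. The key observation is that the integer/level bookkeeping is precisely what one gets by applying a constant-coefficient differential operator in the formal variables $s$ and $r$ to the unshifted generating function: replacing the factor ``$s$ at relative level $m$'' by a Taylor translation is the same as acting by $\exp(\text{(shift)}\cdot \partial)$ where the shift itself is built from how many net up-steps precede that position, and net up-step count is exactly the exponent of $h$. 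Thus the operator $P^{(a)}_{\lambda,\eta}(\partial_s,\partial_r)$ is read off as the coefficient of $\partial_n^\lambda s\,\partial_n^\eta r$ in this translation operator, and it manifestly does not depend on $J$ — it only depends on the shape of the path fragment near each marked step, which is encoded by the $h$-grading, and $J$ enters only through the overall generating function being differentiated.

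I would organize this as: (i) write the Motzkin path expansion of $(L^{J-1})_{n,n}$ and $(L^{J-1})_{n,n-1}$ with $h$ tracking the running height; (ii) substitute the shift-expansions for $r_{n+m},s_{n+m}$ and reorganize the sum by the multi-indices $(\lambda,\eta)$; (iii) identify the resulting coefficient of each $\partial_n^\lambda s\,\partial_n^\eta r$ as a fixed differential operator applied to $(h+s+rh^{-1})^{J-1}$ and then take the constant term (or $h^{-1}$-coefficient); (iv) conclude $J$-independence of $P^{(a)}_{\lambda,\eta},P^{(b)}_{\lambda,\eta}$. The main obstacle is step (ii)--(iii): making the passage from ``weight $s_{n+m}$ at a step of running height $m$'' to ``$\exp(m\partial_n)$ acting, which the contour integral $[h^0]\oint\cdots$ converts into differentiation in $s,r$'' fully rigorous, i.e. showing that the height label $m$ of each step can be traded for the formal variable $h$ in such a way that the translation operators assemble into a single $J$-independent $P_{\lambda,\eta}$. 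Once the dictionary ``running height $\leftrightarrow$ power of $h$'' is set up carefully, the identification is bookkeeping; before that it is the crux. (This is presumably exactly what lemma \ref{str polyn formula} will make explicit, giving the closed form of $P^{(a\text{ or }b)}_{\lambda,\eta}$.)
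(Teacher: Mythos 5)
Your steps (i)--(ii) coincide with the paper's setup: the Motzkin-path expansion of $(L^{J-1})_{n,n}$ and $(L^{J-1})_{n,n-1}$, the substitution of the shift expansions for $r_{n+m},s_{n+m}$, and the reorganization by $(\lambda,\eta)$ are exactly equations (\ref{MPformula134})--(\ref{ModStrPolyn2}) and lemma (\ref{motzkin 131}). The gap is that your step (iii) — the entire content of the lemma — is asserted rather than proved. You claim the operator $P^{(a)}_{\lambda,\eta}$ "manifestly does not depend on $J$" because it "only depends on the shape of the path fragment near each marked step"; that justification is wrong as stated. The weight carried by a marked step is a power $p(i)^{m}$ of the \emph{absolute} running height, which depends on the whole path to the left of the step, not on any local fragment. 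Consequently the naive "read off the coefficient of $\partial_n^\lambda s\,\partial_n^\eta r$" produces, a priori, a height-weighted path sum (equivalently an expression still involving $h$ or $h\partial_h$), not a polynomial in $\partial_s,\partial_r$ with coefficients in $\mathbb{Q}[r]$, and its $J$-independence is precisely what must be established.

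What the paper actually does at this point is the nontrivial part you have skipped: it cuts each path at the marked steps, encodes the cumulative-height powers as Euler operators $\prod_q(h_q\partial_{h_q})^{v_q}$ acting on a product of segment generating functions $\bigl(1-\ell(\sqrt{r}\,h_q+s+\sqrt{r}\,h_q^{-1})\bigr)^{-1}$ (note the symmetrized variable $\sqrt{r}\,h+s+\sqrt{r}\,h^{-1}$, which your sketch never introduces but which is essential), and then proves lemma \ref{Wr lemma} — by induction on the structure of $(h\partial_h)^v$ applied to the geometric series, splitting into even/odd cases in $h+h^{-1}$ versus $h-h^{-1}$ — to trade all $h$-derivatives for $\partial_s,\partial_{\sqrt{r}}$-derivatives. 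The even and odd cases are recombined with the integration-by-parts identity (\ref{qniv}), the coefficient extraction $[h^{l(\eta)}]$ is converted to $\partial_r^{l(\eta)}[h^0]$ by the repeated path-lowering identity (lemma \ref{pathLoweringLemma}), and uniqueness/degree-one in $\partial_r$ comes from the derivative-swapping identity (lemma \ref{pathDerivSwapLemma}). Without these conversion lemmas (or some substitute), your argument stops exactly where the statement of the lemma begins, so as written it is a plausible outline but not a proof.
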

We will prove lemma \ref{alt str polyn thm} in section \ref{string general case sect}.
\begin{proof}[Proof that lemma \ref{alt str polyn thm} implies theorem \ref{str polyn thm}]
First consider the string equation $0=V'(L)_{n,n}$.  We must show that the right hand side is as given in theorem \ref{str polyn thm}:
\begin{align*}
 V'(L)_{n,n} 
=& \sum_{J\geq 1} J (t_J+[1\text{ if }J=2]) (L^{J-1})_{n,n} \\
\sim & \sum_{\lambda,\eta}  \partial_{n}^{\lambda}s\,  \partial_{n}^{\eta} r  \sum_{J\geq 1} J (t_J+[1\text{ if }J=2]) \wt{P}^{(a)}_{\lambda,\eta,J}(s,r) \\
=& \sum_{\lambda,\eta}  \partial_{n}^{\lambda}s\, \partial_{n}^{\eta} r  \sum_{J\geq 1} J (t_J+[1 \text{ if }J=2])  \\ 
& \qquad \qquad \times P^{(a)}_{\lambda,\eta}(\partial_{s}, \partial_{r})\cdot [h^0] (h+ s+ r h^{-1}  )^{J-1}\\
=& \sum_{\lambda,\eta}   \partial_{n}^{\lambda}s\, \partial_{n}^{\eta} r  P^{(a)}_{\lambda,\eta}(\partial_{s}, \partial_{r}) \cdot [h^0] V'( h + s + r  h^{-1}) .
\end{align*}
An analogous calculation applies for the other string equation $x=V'(L)_{n,n-1}$.
\end{proof}

\subsection{Motzkin path formulation of string equations} \label{Motzkin sect}
A Motzkin path of length $\ell$ is a function $p$ from $\{0,1,2,\ldots, \ell\}$ to $\mathbb{Z}$ such that $p(m+1)-p(m) \in \{-1,0,1\}$ for each $m$.  We also think of $p$ as the piecewise linear function obtained by joining points on the graph of $p$ with line segments.  
Because the matrix $L$ is tridiagonal, the entries of a power of $L$ can be expressed in terms of a sum over Motzkin paths.  Let $M^{(J-1)}_{0,0}$ be the set of Motzkin paths of length $J-1$ that start and end at height $0$ (that is, $p(0)=p(J-1)=0$).  Then
\begin{align}
L^{J-1}_{n,n} =& \sum_{p\in M^{J-1}_{0,0}} C(p) \label{MPformula134}\\
C(p):=& \prod_{i=0}^{\text{len}(p)-1} 
\begin{cases} 1 & \text{ if }p(i+1)>p(i), \\
s_{n+p(i),N} & \text{ if }p(i+1)=p(i) \\
r_{n+p(i),N} & \text{ if } p(i+1)<p(i)  \end{cases}.\label{MPformula135}
\end{align}
We call the quantity $C(p)$ the contribution of the Motzkin path $p$. 
The string equations can be expressed in Motzkin path form:
\begin{align}
0=& s_n + \sum_j j t_j \sum_{p\in M^{j-1}_{n,n}} C(p), \\
\frac{n}{N} =& r_n +\sum_j  j t_j \sum_{p\in M^{j-1}_{n,n-1}} C(p).
\end{align}
We wish to use the Motzkin path form of the string equations to state a formula for the modified string polynomials $\wt{P}^{(a\text{ or }b)}_{\lambda,\eta,J}$.  
From equation (\ref{qheic103}), which defines the polynomials $\wt{P}^{(a)}_{\lambda,\eta}$, it is clear that there exist polynomials $C_{\lambda,\eta}(p)$ such that
\begin{align}
C(p)  =& \sum_{\lambda,\eta} C_{\lambda,\eta}(p) \partial^{\lambda}_n s\, \partial^{\eta}_n r.
\end{align}
The quantities $ C_{\lambda,\eta}(p)$, which are polynomials in $r$ and $s$, can be easily found by substituting the asymptotic expansions for index-shifted recurrence coefficients and collecting terms.  Thus
\begin{align*}
\sum_{\lambda,\eta} \wt{P}^{(a)}_{\lambda,\eta,J} \partial^{\lambda}_n s\, \partial^{\eta}_n r 
=& L^{j-1}_{n,n} \\
=& \sum_{p \in M^{(J-1)}_{0,0}} C(p) \\
=& \sum_{p \in M^{(J-1)}_{0,0}}  \sum_{\lambda,\eta} C_{\lambda,\eta}(p) \partial^{\lambda}_n s\, \partial^{\eta}_n r ;
\end{align*} 
and therefore we have
\begin{align}
\wt{P}^{(a)}_{\lambda,\eta,J} =& \sum_{p\in M^{J-1}_{0,0}} C_{\lambda,\eta}(p), \label{ModStrPolyn1} \\
\wt{P}^{(b)}_{\lambda,\eta,J} =& \sum_{p\in M^{J-1}_{0,-1}} C_{\lambda,\eta}(p). \label{ModStrPolyn2}
\end{align}
\begin{figure}
\begin{center}
\includegraphics[width=.6\textwidth]{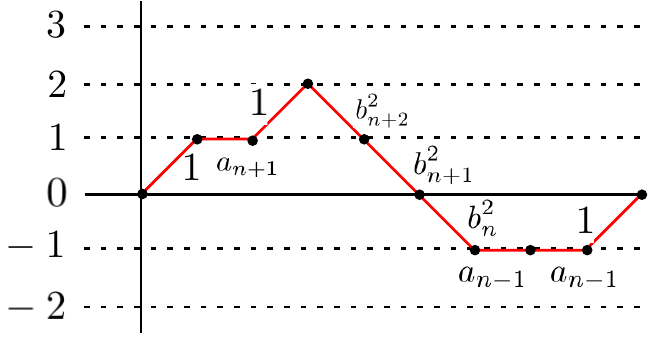}
\end{center} 
\caption{For this Motzkin path $C(p)=a_{n+1}b^{2}_{n+2} b^{2}_{n+1} b^{2}_n a^{2}_{n-1}$.}
\end{figure} 
For example, consider the Motzkin path $p$ shown in figure 1.  The path $p$ contributes $C(p)=s_{n+1}r_{n+2} r_{n+1}r_n s_{n-1}^2$ to the quantity $L^{9}_{0,0}$, calculated as a sum over Motzkin paths according to formula (\ref{MPformula134}).  To find the contribution $C(p;10,1,\phi)$ of this path to the string polynomial $\wt{P}^{(a)}_{2+1,\phi}$, we substitute the series
\begin{align}
s_{n+k,n} \sim& s + \frac{k}{1!}N^{-1}s' + \frac{k^2}{2!}N^{-2} s''+\ldots, \label{10948}\\
r_{n+k,n} \sim& r + \frac{k}{1!}N^{-1}r' + \frac{k^2}{2!}N^{-2} r''+\ldots 
\end{align}
in the formula for $C(p)$ and read off the $N^{-1}s'$ term:
\begin{align*}
C(p)=& s_{n+1} r_{n+2} r_{n+1} r_{n} s_{n-1}^2 \\
=& s^3 r^3 +N^{-1}\left(-s^2 r^3 s' + 3 s^3 r^2 r'\right) +O(N^{-2}) \\
\implies \qquad C_{1,\phi,10}(p )=& -s^2 r^3.
\end{align*}
Notice that in this example, computing the contribution of a path to a string polynomial amounts to selecting one flat step of the Motzkin path from which to choose the $N^{-1}\partial_{x}$ term in expansion (\ref{10948}); from all other steps we choose the leading order term $s$ or $r$.

We now give a formula for modified string polynomials in terms of Motzkin paths.  Define a marking of a Motzkin path $p$ with partitions $\lambda,\eta$ to be a pair of functions $\phi_{\rightarrow},\phi_{\searrow}$, where: $\phi_{\rightarrow}$ assigns to each part of $\lambda$ the left endpoint of a distinct flat step of $p$, and $\phi_{\searrow}$ assigns to each part of $\eta$ the left endpoint of a distinct downward step of $p$.  The contribution of the Motzkin path $p$ to $P^{(a)}_{\lambda,\eta}$ can be computed as a sum over all markings of $p$ with $\lambda$ and $\eta$.  Also, for a path $p$ let $\rightarrow(p),\searrow(p)$ respectively be the number of flat and downward steps of $p$.
\begin{lemma}
\begin{align}
\wt{P}^{(a)}_{\lambda,\eta,J}(s,r) =& \sum_{p\in M^{(J-1)}_{0,0} } \sum_{\phi_{\rightarrow},\phi_{\searrow}} 
s^{\rightarrow(p)} 
r^{\searrow(p)}
\prod_{i} \frac{p(\phi_{\rightarrow}(i))^{\lambda_i}}{\lambda_i !s} 
\prod_{i} \frac{p(\phi_{\searrow}(i))^{\eta_i}}{\eta_i !r}  \label{motzkin 131}
\end{align}
\end{lemma}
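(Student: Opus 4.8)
The plan is to prove (\ref{motzkin 131}) one Motzkin path at a time. By (\ref{ModStrPolyn1}), $\wt{P}^{(a)}_{\lambda,\eta,J} = \sum_{p\in M^{(J-1)}_{0,0}} C_{\lambda,\eta}(p)$, where $C_{\lambda,\eta}(p)$ is the coefficient of the monomial $\partial_n^\lambda s\,\partial_n^\eta r$ in the asymptotic expansion of the path contribution $C(p)$. Hence it is enough to show, for each fixed path $p\in M^{(J-1)}_{0,0}$, that $C_{\lambda,\eta}(p)$ equals the inner sum over $\phi_\rightarrow,\phi_\searrow$ in (\ref{motzkin 131}); summing over $p$ then gives the lemma. (The companion identity for $\wt{P}^{(b)}_{\lambda,\eta,J}$ is proved identically, starting from (\ref{ModStrPolyn2}) with $M^{(J-1)}_{0,0}$ replaced by $M^{(J-1)}_{0,-1}$.)

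For the single-path identity I would expand $C(p)$ directly. By (\ref{MPformula134})-(\ref{MPformula135}), $C(p)$ is a product of one factor per step of $p$: an up step contributes $1$, a flat step whose left endpoint sits at height $h$ contributes $s_{n+h,N}$, and a downward step whose left endpoint sits at height $h$ contributes $r_{n+h,N}$. I substitute the index-shift expansions $s_{n+h,N}\sim\sum_{m\geq 0}\frac{h^m}{m!}\partial_n^m s$ and $r_{n+h,N}\sim\sum_{m\geq 0}\frac{h^m}{m!}\partial_n^m r$ --- these are (\ref{rnExpand})-(\ref{snExpand}) with shift $k=h$ and $\partial_n = N^{-1}\partial_x$ --- and multiply everything out. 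A generic term of the expanded product is indexed by a choice of a derivative order $m_e\geq 0$ at every flat step $e$ and $m_f\geq 0$ at every downward step $f$, and it equals $\big(\prod_e \frac{h_e^{m_e}}{m_e!}\partial_n^{m_e}s\big)\big(\prod_f \frac{h_f^{m_f}}{m_f!}\partial_n^{m_f}r\big)$, with $h_e,h_f$ the heights of the relevant left endpoints. Since $\partial_n^0 s = s$ and $\partial_n^0 r = r$, such a term contributes to the coefficient of $\partial_n^\lambda s\,\partial_n^\eta r$ exactly when the multiset of positive $m_e$'s equals $\lambda$ and the multiset of positive $m_f$'s equals $\eta$; in that case precisely $\rightarrow(p)-\text{len}(\lambda)$ flat steps and $\searrow(p)-\text{len}(\eta)$ downward steps sit at order $0$ and contribute a bare $s$ resp.\ $r$, so every such term carries the common scalar factor $s^{\rightarrow(p)-\text{len}(\lambda)}\,r^{\searrow(p)-\text{len}(\eta)}$. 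Collecting, $C_{\lambda,\eta}(p) = s^{\rightarrow(p)-\text{len}(\lambda)}\,r^{\searrow(p)-\text{len}(\eta)}\sum\prod_{e:\,m_e>0}\frac{h_e^{m_e}}{m_e!}\prod_{f:\,m_f>0}\frac{h_f^{m_f}}{m_f!}$, the sum being over all admissible order-assignments.

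It remains to read this last sum as a sum over markings. An admissible assignment $(m_e)$ is nothing but a placement of the parts of $\lambda$ onto distinct flat steps, leaving the remaining flat steps at order $0$ --- i.e.\ a map $\phi_\rightarrow$ of the kind in the definition of a marking, for which $p(\phi_\rightarrow(i))$ is precisely the height $h_e$ of the step carrying the part $\lambda_i$, and the Taylor weight $1/m_e!$ is the $1/\lambda_i!$ in (\ref{motzkin 131}); similarly for $\eta$ and $\phi_\searrow$. Absorbing $s^{-\text{len}(\lambda)} = \prod_i(1/s)$ and $r^{-\text{len}(\eta)} = \prod_i(1/r)$ into the products --- which upgrades $s^{\rightarrow(p)-\text{len}(\lambda)}$ to $s^{\rightarrow(p)}$ and $r^{\searrow(p)-\text{len}(\eta)}$ to $r^{\searrow(p)}$ --- yields exactly the inner sum in (\ref{motzkin 131}), provided one keeps track of the partition automorphisms: viewed as a genuine injection from the indexed parts of $\lambda$, a marking overcounts each assignment $(m_e)$ by the factor $\prod_j(\#\{i:\lambda_i=j\})!$, and similarly for $\eta$, so the sum over markings must be taken modulo these automorphisms (equivalently, divided by $\prod_j(\#\{i:\lambda_i=j\})!\cdot\prod_j(\#\{i:\eta_i=j\})!$). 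With that convention, summing the single-path identity over $p\in M^{(J-1)}_{0,0}$ gives (\ref{motzkin 131}).

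I do not expect a deep obstruction here; the whole argument is an expansion-and-collect computation. The point requiring care is the bookkeeping of the second paragraph --- verifying that the leading-order (``unmarked'') steps always contribute the single common factor $s^{\rightarrow(p)-\text{len}(\lambda)}r^{\searrow(p)-\text{len}(\eta)}$ irrespective of which steps are marked, and then pinning down the exact dictionary between derivative-order assignments and markings, in particular getting the partition automorphism factors right.
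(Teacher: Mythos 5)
Your argument is correct and is essentially the paper's own proof: you expand each flat and downward step's coefficient via the index-shift series $s_{n+p(i)}\sim\sum_m p(i)^m\partial_n^m s/m!$ (respectively $r_{n+p(i)}$), and read off the coefficient of $\partial_n^\lambda s\,\partial_n^\eta r$ as a sum over markings, which is exactly how the paper encodes the choice of derivative terms from each step. Your caveat about partition automorphisms is a legitimate bookkeeping point that the paper's two-sentence proof glosses over: the identity holds precisely when markings that differ only by permuting equal parts of $\lambda$ or $\eta$ are identified (i.e., markings are multiset placements of parts on distinct steps), which is the convention you adopt.
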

\begin{proof}
Notice that the dependence of the right hand side on $\lambda$ and $\eta$ is through the conditions on the markings $\phi_{\rightarrow},\phi_{\searrow}$ explained above.  We compute the modified string polynomial according to formula (\ref{ModStrPolyn1}).  For each downward step of the Motzkin path, we must sum over possible choices of a term in the expansion $r_{n+p(i)}=\sum_m p(i)^m \partial^{m}_n r /m!$.  Similarly for each flat step of the path.  The sums over the markings $\phi_\searrow$ and $\phi_{\rightarrow}$ encode these markings.
\end{proof}

\subsection{Example $P^{(a)}_{\phi,2}$} \label{str polyn example sect}

In this section we give a formula for the string polynomial $P^{(a)}_{\lambda,\eta}$ in the case that $\lambda$ is the empty partition, and $\eta$ has a single part of size $m$.  We will then restrict to the case $m=2$ and compute $P^{(a)}_{\phi,2}$ explicitly.  Our proof of theorem \ref{str polyn thm} in the next section simply generalizes the calculation here.

Appealing to formula (\ref{motzkin 131}), let us identify the conditions on the sum over $\phi_{\rightarrow}$ and $\phi_{\searrow}$.  Since the partition $\lambda=\phi$ is empty, there is no sum over choices of $\phi_{\rightarrow}$.  The function $\phi_{\searrow}$ assigns the single part $m$ of the partition $\eta$ to one of the downward steps of the Motzkin path $p$.  This is the unique downward step $i$ from which we choose the $N^{-m}\partial_{x}^m$ term from the expansion on $r_{n+p(i)}$.   Let $i_1,j_1,k_1$ be the numbers of upward, flat, and downward steps of $p$ before this special step, and $i_2,j_2,k_2$ be the numbers of upward, flat and downward steps after it.  Clearly we have the following relations
\begin{align}
\sideset{}{_{\alpha=1}^2}\sum i_\alpha+ j_\alpha +k_\alpha =& J-2 \\
\sideset{}{_{\alpha=1}^2}\sum i_{\alpha}-k_{\alpha} =& 1
\end{align}
Notice that from any given starting height, the number of Motzkin paths with numbers $i,j,k$ of upward, flat, and downward steps will be given by the trinomial coefficient $\trinom{\ast}{i}{j}{k} =\frac{(i+j+k)!}{i!j!k!}$.
Then the string polynomial can be calculated as follows:
\begin{align}
\wt{P}^{(a)}_{\phi,m}(s,r;J) =& \sum_{\substack{\sum_\alpha i_\alpha+j_\alpha+ k_\alpha =J-2 \\ \sum_\alpha i_\alpha- k_{\alpha} =1}} s^{j_1+j_2} r^{k_1+k_2} \frac{(i_1-k_1)^m}{m!} \nonumber \\
& \qquad \qquad \qquad \qquad \times\trinom{\ast}{i_1}{j_1}{k_1} \trinom{\ast}{i_2}{j_2}{k_2} \label{str coeff 1} .
\end{align}
Notice that the height before the downward step from which we take the $N^{-m}\partial_{x}^m$ term is given by the difference between the numbers $i_1,k_1$ of upward and downward steps.

We now use combinatorial tricks to evaluate the right hand side in formula (\ref{str coeff 1}).  Instead of explicitly putting conditions on the sum, we introduce dummy variables $\ell$ for the length of the path, and $h$ for the change in height of the path.  
\begin{align}
&\wt{P}^{(a)}_{\phi,m}(s,r;J) \nonumber \\
&= [h^{1} \ell^{J-2} ] |_{h_\alpha \rightarrow h } \sum_{i_1,\ldots,k_2\geq 0} \frac{(i_1-k_1)^{m}}{m!}    
 \prod_{\alpha=1}^2 (\ell   h_\alpha)^{i_\alpha} (\ell s)^{j_\alpha} (\ell r h_{\alpha}^{-1})^{k_\alpha} \trinom{\ast}{i_\alpha}{j_\alpha}{k_\alpha}\nonumber  \\
&= [h^{1} \ell^{J-2} ] |_{h_\alpha \rightarrow h } \sum_{i_1,\ldots,k_2\geq 0} \frac{(i_1-k_1)^{m}}{m!}    
 r^{-1/2} \prod_{\alpha=1}^2 (\ell   \sqrt{r} h_\alpha)^{i_\alpha} (\ell s)^{j_\alpha} (\ell \sqrt{r} h_{\alpha}^{-1})^{k_\alpha} \trinom{\ast}{i_\alpha}{j_\alpha}{k_\alpha}\nonumber  \\
&= [h^{1} \ell^{J-2}]|_{h_\alpha \rightarrow h} \frac{(h_1 \partial_{h_1} )^m}{m!}
  \sum_{i_1,\ldots,k_2\geq 0}
 r^{-1/2} \prod_{\alpha=1}^2 (\ell  \sqrt{r} h_\alpha)^{i_\alpha} (\ell s)^{j_\alpha} (\ell \sqrt{r} h_{\alpha}^{-1})^{k_\alpha} \trinom{\ast}{i_\alpha}{j_\alpha}{k_\alpha} \nonumber  \\ 
&=[h^{1} \ell^{J-2} ]|_{h_\alpha \rightarrow h} \frac{(h_1 \partial_{h_1} )^m}{m!} r^{-1/2} \prod_{\alpha=1}^2 \frac{1}{1- \ell( \sqrt{r} h_\alpha + s + \sqrt{r} h_{\alpha}^{-1})}. \label{str coeff ex 1}
\end{align}
The order of operations reads from right to left: apply derivatives, set $h_{\alpha}$'s equal to $h$, then extract the series coefficients.  Let us comment that in formula (\ref{str coeff ex 1}), the coefficient of $\ell^{J-2}$ must be extracted first, then the coefficient of $h^1$, and not vice-versa.  

For any particular value of $m$, we are able to evaluate the derivatives in equation (\ref{str coeff ex 1}) symbolically, then use familiar Taylor series to extract the appropriate coefficients.  Consider the example $m=2$.  Evaluating the two derivatives in (\ref{str coeff ex 1}), we are able to calculate
\begin{align}
\wt{P}^{(a)}_{\phi,2}(s,r;J) =& [h^{1} \ell^{J-2}]
\frac{(h+h^{-1})\ell /2}{(1-\ell s-\ell \sqrt{r} (h+h^{-1}))^3} + \frac{(h^{2}-2+h^{-2})\ell^2 \sqrt{r}}{(1-\ell s-\ell \sqrt{r} (h+h^{-1}))^4} \nonumber \\
=& [h^{1} \ell^{J-2}] \left( \frac{1}{4}\partial_s \partial_{\sqrt{r}} + \frac{1}{6}\sqrt{r} \partial_s \partial_{\sqrt{r}}^2 -\frac{2}{3}\sqrt{r} \partial_{s}^3 \right) \frac{\ell^{-1} }{1-\ell s-\ell \sqrt{r} (h+h^{-1})} \nonumber \\
=& \left( \frac{1}{4} \partial_s \partial_{\sqrt{r}} + \frac{1}{6} \sqrt{r}\partial_s \partial_{\sqrt{r}}^2 -\frac{2}{3}\sqrt{r} \partial_{s}^3 \right) [h^1](\sqrt{r} h+ A + \sqrt{r} h^{-1})^{J-1} \nonumber  \\
=& \left( \frac{1}{6} \partial_{s}^2 + \frac{1}{12} \partial_{r} \right) [h^0 ] (h +s + r h^{-1})^{J-1} \label{a923nv}
\end{align}
We have expressed the coefficient $\wt{P}^{(a)}_{\phi,2,J}(s,r)$ as a differential operator acting on $(h +s + r h^{-1})^{J-1}$, exactly as in equation (\ref{univ str poly 125}).  In this case we see that the differential operator does not depend on $J$; theorem \ref{str polyn thm} asserts that this will be true in general.  We have shown that
\begin{align}
P^{(a)}_{\phi,2} =& \frac{1}{6} \partial_{s}^2 + \frac{1}{12} \partial_{r}.
\end{align}
For the last step in calculation (\ref{a923nv}), we used the following identities:  
\begin{align}
\partial_s [h^1](\sqrt{r}h +s +\sqrt{r}h^{-1})^{J-1} =& \frac{1}{2} \partial_{\sqrt{r}} [h^0](\sqrt{r}h +s +\sqrt{r} h^{-1})^{J-1} \label{lowering1} \\
\partial_{s}^2 [h^0] (\sqrt{r}h + s + \sqrt{r} h^{-1})^{J-1} 
=& \frac{1}{4}\left(r^{-1/2} \partial_{\sqrt{r}} + \partial_{\sqrt{r}}^2\right)[h^0](\sqrt{r}h + s + \sqrt{r}h^{-1})^{J-1}\label{shifting 1}
\end{align}
These identities are somewhat trivial, but we give proofs in section \ref{MotzkinIdentSect}; see lemmas \ref{pathLoweringLemma} and \ref{pathDerivSwapLemma}.

\subsection{General case} \label{string general case sect}

We now extend the method in the above example to compute $P^{(a)}_{\lambda,\eta}(\partial_s,\partial_r)$ for arbitrary partitions $\lambda,\eta$.  Clearly the polynomials $P^{(b)}_{\lambda,\eta}$ can be calculated analogously, but replacing $M^{(J-1)}_{0,0}$ with $M^{(J-1)}_{0,-1}$ in formula (\ref{motzkin 131}).  The following lemma gives an explicit (though cumbersome) formula for the string polynomials:
\begin{lemma} \label{str polyn formula}
Let $l(\lambda)$ and $l(\eta)$ be the numbers of parts of the partitions $\lambda,\eta$.  Let $I$ be the left ideal of the ring $\mathbb{C}[r^{1/2},r^{-1/2},\partial_r,\partial_s]$ generated by $r \partial^{2}_{r}+ \partial_r -  \partial_{s}^2$. Then $P^{(a)}_{\lambda,\eta}(\partial_s,\partial_{r})$ is the unique element of degree at most $1$ in $\partial_{r}$ within the following coset:
\begin{align}
P^{(a)}_{\lambda,\eta}(\partial_s,\partial_r) \equiv \sum_{\vec{v}} C_{\vec{v}}(\lambda,\eta) W_{\vec{v}}(\partial_s,\partial_{\sqrt{r}}) \partial_{s}^{l(\lambda)} \partial_{r}^{l(\eta)}
&\qquad \text{mod }I.
\end{align}
The coefficients $W_{\vec{v}}$ are defined by
\begin{align}
&\prod_{q=0}^{l(\lambda)+l(\eta)}\left[ \ell (h \partial_h)^{v_q} (\Q)^{-1}\right]  \label{2bi8} \\
&\qquad =  W_{\vec{v}}(\partial_{s},\partial_{\sqrt{r}})
\begin{cases}
 \partial_{s}^{l(\lambda)+l(\eta)}  (\Q)^{-1} & \text{ if }\sum v_i\text{ is even} \\
 \partial_{s}^{l(\lambda)+l(\eta)+1} \displaystyle \frac{\sqrt{r}(h-h^{-1})/l(\eta) }{\Q} & \text{ if }\sum v_i \text{ is odd} 
\end{cases} \nonumber
\end{align}
The coefficients $C_{\vec{v}}$ are defined by
\begin{align}
\sum_{m,\sigma} \prod_{q=1}^{l(\lambda)+l(\eta)} \frac{ \left(\sum_{\alpha=0}^{q-1} h_\alpha \partial_{h_\alpha} -\sigma_{\alpha} \right)^{m_q}}{m_q !} 
=& \sum_{\vec{v}} C_{\vec{v}}(\lambda,\eta) \prod_{q=0}^{l(\lambda)+l(\eta)} (h_q \partial_{h_q})^{v_q}.
\end{align}
The conditions on the sum over $m,\sigma$ are that the values of $m:\{1,2,\ldots,l(\lambda)+l(\eta) \}\rightarrow \mathbb{Z}$ are in 1-1 correspondence with the parts of $\lambda$ and $\eta$; and that $\sigma:\{1,2,\ldots,l(\lambda)+l(\eta) \}\rightarrow \{0,1\}$ is such that if $\sigma(i)=0$, then $m(i)$ is a part of $\lambda$; and if $\sigma(i)=1$, then $m(i)$ is a part of $\eta$.
\end{lemma}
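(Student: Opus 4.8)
The plan is to follow and generalize the explicit computation of $P^{(a)}_{\phi,2}$ in Section~\ref{str polyn example sect}; the operators $P^{(b)}_{\lambda,\eta}$ are obtained identically, with $M^{(J-1)}_{0,0}$ replaced by $M^{(J-1)}_{0,-1}$ in~(\ref{motzkin 131}). Write $\Pi_J:=[h^0](\mpmon)^{J-1}$ for the moment appearing in Lemma~\ref{alt str polyn thm}, so that the goal is to produce an operator $\Theta_{\lambda,\eta}(\partial_s,\partial_r)$ of the claimed coset form with $\Theta_{\lambda,\eta}\,\Pi_J=\wt{P}^{(a)}_{\lambda,\eta,J}(s,r)$ for all $J$. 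The first step turns the marked-path formula~(\ref{motzkin 131}) into a generating function. A marking $(\phi_{\rightarrow},\phi_{\searrow})$ singles out $l(\lambda)+l(\eta)$ steps, which cut each path of $M^{(J-1)}_{0,0}$ into $L:=l(\lambda)+l(\eta)+1$ unconstrained Motzkin segments. Using a length variable $\ell$ and a variable $h_q$ recording the signed height change of the $q$-th segment, the trinomial count $\trinom{\ast}{i}{j}{k}$ of segments sums each segment to a geometric series which, after the symmetrizing rescaling that trades $r$ for $\sqrt r$ exactly as in~(\ref{str coeff ex 1}), becomes $\bigl(1-\ell(\sqrt r\,h_q+s+\sqrt r\,h_q^{-1})\bigr)^{-1}$; a marked flat step inserts a factor $\ell s$ and a marked down step a factor $\ell\sqrt r\,h_q^{\pm1}$.

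The marking weights $p(\phi_{\rightarrow}(i))^{\lambda_i}/(\lambda_i!\,s)$ and $p(\phi_{\searrow}(i))^{\eta_i}/(\eta_i!\,r)$ each carry the height reached just before the marked step raised to the indicated power; that height is the net up-minus-down count over the preceding segments, \emph{minus} the number of earlier marked down steps, so it is produced by $\sum_{\alpha<q}h_\alpha\partial_{h_\alpha}$ acting on the product of the segment factors, corrected by $-\sigma_\alpha$. Raising to the $\lambda_i$-th or $\eta_i$-th power, dividing by the factorials, and summing over all ways to assign the parts of $\lambda$ and $\eta$ to the marked steps is precisely the sum over $(m,\sigma)$ in the definition of $C_{\vec v}(\lambda,\eta)$; this identifies $\wt{P}^{(a)}_{\lambda,\eta,J}$ with the appropriate $\ell$- and $h$-coefficient, taken after $h_q\to h$, of $\sum_{\vec v}C_{\vec v}(\lambda,\eta)\prod_{q=0}^{L-1}\ell(h_q\partial_{h_q})^{v_q}\bigl(1-\ell(\sqrt r\,h_q+s+\sqrt r\,h_q^{-1})\bigr)^{-1}$, i.e.\ with the left side of~(\ref{2bi8}). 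The second step puts the operator $\prod_q\ell(h\partial_h)^{v_q}(\Q)^{-1}$, after $h_q\to h$, into the normal form on the right of~(\ref{2bi8}): since $h\partial_h$ carries the symmetric part of $(\Q)^{-1}$ to an antisymmetric part proportional to $\sqrt r(h-h^{-1})$ and back, the parity of $\sum v_i$ decides whether one is reduced to the $[h^0]$ moment or to the $[h^{\pm1}]$ moment with antisymmetric numerator, and the coefficients in $\partial_s,\partial_{\sqrt r}$ that appear are by definition $W_{\vec v}(\partial_s,\partial_{\sqrt r})$. Here I would use Lemmas~\ref{pathLoweringLemma} and~\ref{pathDerivSwapLemma} (the identities~(\ref{lowering1})--(\ref{shifting 1})) to trade $h\partial_h$ and the $h$-coefficient extractions for $\partial_s$ and $\partial_{\sqrt r}$, and then $\partial_{\sqrt r}=2\sqrt r\,\partial_r$, $\partial_{\sqrt r}^2=4r\partial_r^2+2\partial_r$ to return to $\partial_r$.

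This last substitution is exactly what forces the ideal $I$ into the statement: rewritten in terms of $\partial_r$, identity~(\ref{shifting 1}) says $(r\partial_r^2+\partial_r-\partial_s^2)\Pi_J=0$ for every $J$, so the conversions above respect differential operators only modulo $I$, and they raise the $\partial_r$-degree; reducing the resulting operator modulo $I$ by $r\partial_r^2\equiv\partial_s^2-\partial_r$ down to $\partial_r$-degree at most $1$ produces the claimed representative, which must coincide with $P^{(a)}_{\lambda,\eta}$ since the latter is pinned down by Lemma~\ref{alt str polyn thm} together with the degree bound. Uniqueness within the coset is cheap: any nonzero element of $I$ is $B\cdot(r\partial_r^2+\partial_r-\partial_s^2)$ with $B\neq 0$, and because $\partial_r^{\,i}\cdot r\partial_r^2=r\partial_r^{\,i+2}+i\partial_r^{\,i+1}$ it has $\partial_r$-degree $\deg_{\partial_r}(B)+2\geq 2$, so $I$ contains no nonzero operator of $\partial_r$-degree $\leq 1$. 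I expect the main difficulty to lie in the bookkeeping of the first two steps, and above all in correctly tracking the height shifts caused by marked down steps — the $\sigma$ corrections, which are what produce the two-case structure of both $C_{\vec v}$ and $W_{\vec v}$ — together with checking that the conversions in the second step are governed by $I$ and by nothing more; all of this is of the same character as, but considerably more intricate than, the computation of $P^{(a)}_{\phi,2}$ already carried out in the text.
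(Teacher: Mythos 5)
Your proposal is correct and follows essentially the same route as the paper: cutting each marked Motzkin path into unconstrained segments with dummy variables $\ell$ and $h_q$, expanding the resulting height operators in powers of $h_q\partial_{h_q}$ to define $C_{\vec v}$, normalizing by the parity of $\sum v_i$ to define $W_{\vec v}$, and then using the appendix identities (path lowering, derivative swapping, integration by parts) to convert to $\partial_s,\partial_r$ acting on $[h^0](h+s+rh^{-1})^{J-1}$ and to reduce modulo $I$ to $\partial_r$-degree at most one. Your explicit degree argument for uniqueness within the coset is, if anything, slightly more detailed than the paper's, but the method is the same.
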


%
As in section \ref{str polyn example sect} we start with the Motzkin path formula (\ref{motzkin 131}) for the modified string polynomial $\wt{P}^{(a)}_{\lambda,\eta,J}$.
To generalize the calculation there, we let $i_\alpha,j_\alpha,k_\alpha$ be the numbers of upward, flat and downward steps between the $\alpha^{th}$ and $\alpha+1^{st}$ marked steps.  To give a formula analogous to equation (\ref{str coeff 1}), we need to define two auxiliary functions that describe which parts of the partitions $\lambda,\eta$ correspond to which marked steps.
Let $\sigma_p$ be the function function that takes the value $1$ if the $p^{th}$ marked step is a downward step and zero otherwise.  The function $m$ represents an assignment of parts of the partitions $\lambda,\eta$ to the marked steps.
\begin{align}
\wt{P}^{(a)}_{\lambda,\eta,J}(s,r) =&  \sum_{m,\sigma}
\sum_{\substack{\sum i_q -k_q =l(\eta) \\ \sum i_q +j_q +k_q = J-1-l(\lambda)-l(\eta)}}
\prod_{q=1}^{l(\lambda)+l(\eta)} \frac{\left(\sum_{\alpha=0}^{q-1} i_\alpha-k_\alpha -\sigma_\alpha \right)^{m_q}}{m_q !}
\nonumber \\
& \qquad \qquad \qquad \qquad \qquad \;\;\;\times \prod_{q=0}^{l(\lambda)+l(\eta)} s^{i_q} r^{k_q} \trinom{\ast}{i_q}{j_q}{k_q}  \label{strcoeff 34}
\end{align}
The conditions on the sum over $m,\sigma$ are that the values of $m:\{1,2,\ldots,l(\lambda)+l(\eta) \}\rightarrow \mathbb{Z}$ are in 1-1 correspondence with the parts of $\lambda$ and $\eta$; and that $\sigma:\{1,2,\ldots,l(\lambda)+l(\eta) \}\rightarrow \{0,1\}$ is such that if $\sigma(i)=0$, then $m(i)$ is a part of $\lambda$; and if $\sigma(i)=1$, then $m(i)$ is a part of $\eta$.  Notice that the principal way in which the right hand side of formula (\ref{strcoeff 34}) depends of $\lambda,\eta$ is through the conditions on the sum over $m,\sigma$.

By a calculation analogous to the example in section \ref{str polyn example sect}, we find that
\begin{align}
\wt{P}^{(a)}_{\lambda,\eta}(s,r) =&
\sum_{m,\sigma} [h^{l(\eta)} \ell^{J-1-l(\lambda)-l(\eta)}]\bigg|_{h_\alpha \rightarrow h} \frac{1}{r^{-l(\eta)/2}} \prod_{q=1}^{l(\lambda)+l(\eta)} \frac{ \left(\sum_{\alpha=0}^{q-1} h_{\alpha} \partial_{h_\alpha} -\sigma_\alpha \right)^{m_q}}{m_q !}
\nonumber \\
& \qquad \qquad \qquad \qquad \qquad \times \prod_{q=0}^{l(\lambda)+l(\eta)} \frac{1}{1-\ell ( \sqrt{r} h_q + s + \sqrt{r} h_{q}^{-1})}
\end{align}
The dependence of terms in the above sum on $m$ and $\sigma$ appears only through the differential operators acting on the product of generating functions.  We can thus expand this differential operator as a polynomial in $\partial_{h_1},\ldots,\partial_{h_{l(\lambda)+l(\eta)}}$, defining the symbol $C_{\vec{v}}(\lambda,\eta)$ for the coefficients of this polynomial; we have chosen to explicitly indicate that the coefficients depend on $\lambda,\eta$.
\begin{align*}
\sum_{m,\sigma} \prod_{q=1}^{l(\lambda)+l(\eta)} \frac{ \left(\sum_{\alpha=0}^{q-1} h_{\alpha} \partial_{h_\alpha} -\sigma_\alpha \right)^{m_q}}{m_q !} =& \sum_{\vec{v}} C_{\vec{v}}(\lambda,\eta) \prod_{q=0}^{l(\lambda)+l(\eta)} (h_q \partial_{h_q})^{v_q}
\end{align*}
The sum is over tuples $\vec{v}=(v_0,v_1,\ldots,v_{l(\lambda)+l(\eta)})$, but clearly only finitely many terms in the sum are nonzero.  In terms of the coefficients just defined, we find the following formula.
\begin{align*}
\wt{P}^{(a)}_{\lambda,\eta,J}(s,r) =& [h^l(\eta) \ell^{J-1-l(\lambda)-l(\eta)}] \frac{1}{r^{l(\eta)/2}}  \sum_{\vec{v}} C_{\vec{v}}(\lambda,\eta) \\
& \qquad \qquad \qquad \qquad  \prod_{q=0}^{l(\lambda)+l(\eta)} \left[(h \partial_h)^{v_q} \frac{1}{1-\ell( \sqrt{r} h + s + \sqrt{r} h^{-1})}\right]
\end{align*}
Following the example in section \ref{str polyn example sect}, we would like to represent this as a differential operator given in terms of $\partial_r$ and $\partial_s$ acting on $1/[1-\ell( \sqrt{r} h + s + \sqrt{r} h^{-1})]$.  The purpose of the following lemma is to exchange $\partial_h$ derivatives for $\partial_s$ and $\partial_s$ derivatives.\\
\begin{lemma} \label{Wr lemma}
For each tuple of nonnegative integers $\vec{v}$, then there exists a polynomial $W_{\vec{v}}(\partial_s,\partial_{\sqrt{r}})$ with coefficients in the polynomial ring $\mathbb{C}[\sqrt{r}]$ such that
\begin{align}
&\prod_{q=0}^{\alpha+\beta}\left[ (h \partial_h)^{r_q} \frac{1}{\Q }\right]  \label{2bi8} \\
&\qquad = \ell^{-l(\lambda)-l(\eta)} W_{\vec{v}}(\partial_{s},\partial_{\sqrt{r}})
\begin{cases}
 \partial_{s}^{l(\lambda)+l(\eta)} \displaystyle  \frac{1}{\Q} & \text{ if }\sum v_i\text{ is even} \\
 \partial_{s}^{l(\lambda)+l(\eta)+1}   \displaystyle \frac{\sqrt{r} (h-h^{-1})/l(\eta) }{\Q} & \text{ if }\sum v_i \text{is odd} 
\end{cases} \nonumber
\end{align}
\end{lemma}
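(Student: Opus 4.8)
The identity is a statement about the single function $G:=\frac{1}{\Q}=\sum_{J\ge1}\ell^{\,J-1}(\mpsym)^{J-1}$, and I would prove it by controlling how the Euler operator $h\partial_h$ interacts with $\partial_s$ and $\partial_{\sqrt r}$ on $G$. Write $L:=l(\lambda)+l(\eta)$ and $b:=l(\eta)$. First record the elementary identities obtained by differentiating $G$,
\[
\partial_s G=\ell G^2,\qquad \partial_{\sqrt r}G=\ell(h+h^{-1})G^2,\qquad h\partial_h\,G=\ell\sqrt r\,(h-h^{-1})G^2,
\]
which are immediate and are the unprojected analogues of (\ref{lowering1}) (here no $[h^{p}]$-extraction is involved). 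Iterating the first two gives $\partial_s^{\,k}G^{m}=\tfrac{(m+k-1)!}{(m-1)!}\ell^{\,k}G^{m+k}$ and $\partial_{\sqrt r}^{\,a}G^{m}=\tfrac{(m+a-1)!}{(m-1)!}\ell^{\,a}(h+h^{-1})^{a}G^{m+a}$, the Laurent factor being inert under $\partial_{\sqrt r}$. An induction on $v$ using the third identity, together with the fact that $h\partial_h$ anticommutes with $h\mapsto h^{-1}$ while $G$ is symmetric, shows that each factor $(h\partial_h)^{v}G$ is a $\mathbb{Q}$-linear combination of terms $c\,\ell^{\,m-1}\sqrt r^{\,m-1}\,Q(h)\,G^{m}$ with $Q$ a Laurent polynomial of degree $<m$ and of parity $(-1)^{v}$ under $h\leftrightarrow h^{-1}$, and with $m\ge2$ unless $v=0$. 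Multiplying the $L+1$ factors, every term of $\prod_{q}(h\partial_h)^{v_q}G$ then has the shape $\ell^{\,M-L-1}\sqrt r^{\,M-L-1}\,Q(h)\,G^{M}$ with $M\ge L+1$, $\deg Q\le M-L-1$, $Q$ of parity $(-1)^{\sum v_i}$, and $M\ge L+2$ whenever $\sum v_i\neq0$.

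It therefore suffices to write each such shape term as $\ell^{-L}W(\partial_s,\partial_{\sqrt r})$ applied to $\partial_s^{L}G$ when $\sum v_i$ is even, and to $\partial_s^{L+1}\tilde G$ with $\tilde G:=\tfrac{\sqrt r(h-h^{-1})}{b}\,G$ when $\sum v_i$ is odd, where $W$ has coefficients in $\mathbb{C}[\sqrt r]$; summing over the finitely many terms then yields $W_{\vec v}$. In the even case $Q$ is a polynomial of degree $a\le M-L-1$ in $h+h^{-1}$, so it is enough to treat $(h+h^{-1})^{a}G^{M}$, and the displayed formulas give directly
\[
\ell^{\,M-L-1}\sqrt r^{\,M-L-1}(h+h^{-1})^{a}G^{M}=\ell^{-L}\,\frac{\sqrt r^{\,M-L-1}}{(M-1)!}\,\partial_s^{\,M-L-1-a}\partial_{\sqrt r}^{\,a}\,\partial_s^{L}G,
\]
whose coefficient $\sqrt r^{\,M-L-1}/(M-1)!$ lies in $\mathbb{C}[\sqrt r]$ as required. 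In the odd case write $Q=(h-h^{-1})Q'$ with $Q'$ a polynomial of degree $a\le M-L-2$ in $h+h^{-1}$, and induct on $a$: a direct computation shows that $\partial_s^{\,M-L-2-a}\partial_{\sqrt r}^{\,a}\,\partial_s^{L+1}\tilde G$ equals a nonzero scalar multiple of $\sqrt r\,(h-h^{-1})(h+h^{-1})^{a}G^{M}$ plus one correction term $(h-h^{-1})(h+h^{-1})^{a-1}G^{M-1}$, the correction carrying no factor of $\sqrt r$ because it arises from $\partial_{\sqrt r}$ hitting the explicit $\sqrt r$ in $\tilde G$ rather than the $\sqrt r$ inside $G$. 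Multiplying that relation by $\sqrt r^{\,M-L-2}$ — legitimate since $M-L-2\ge a\ge0$ — turns the correction into a shape term of the same kind with $a$ lowered by one, which the inductive hypothesis covers; the base case $a=0$ is immediate from $\partial_s^{\,M-L-2}\partial_s^{L+1}\tilde G\propto\sqrt r\,(h-h^{-1})G^{M}$.

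The main obstacle is precisely this odd case: since $\partial_{\sqrt r}$ does not commute with multiplication by $\sqrt r$, applying it to $\partial_s^{L+1}\tilde G$ generates the $\sqrt r$-free correction terms, and the induction must be organized carefully enough to absorb them while verifying that the powers of $\sqrt r$ attached to the operators never become negative — this is where the degree bound $\deg Q\le M-L-1$ from the first step is used essentially. It then remains only to check the routine point that the explicit prefactor $\ell^{-l(\lambda)-l(\eta)}$ is exactly what makes the powers of $\ell$ balance (each shape term carries $\ell^{\,M-L-1}$, while the reduction formulas, applied to $\partial_s^{L}G$ or $\partial_s^{L+1}\tilde G$, produce an overall $\ell^{\,M-1}$), and that since neither reduction refers to $J$, the resulting operators $W_{\vec v}$ are automatically independent of $J$.
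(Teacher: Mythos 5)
Your proof is correct and follows essentially the same route as the paper: expand the iterated Euler operators $(h\partial_h)^{v_q}$ acting on $1/\left(\Q\right)$ by induction into parity-structured Laurent numerators over powers of the denominator, then realize each resulting term by applying $\partial_s$ and $\partial_{\sqrt r}$ to the even/odd basis functions. Your write-up is in fact somewhat more complete than the paper's, since you track the degree bound $\deg Q\le M-L-1$ (which guarantees nonnegative powers of $\sqrt r$ and the asserted $\partial_s$-divisibility) and you carry out the odd-case induction absorbing the correction terms produced when $\partial_{\sqrt r}$ hits the explicit $\sqrt r$, details the paper leaves implicit.
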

\begin{proof}
First we claim that for any nonnegative integer $v$ there exist rational functions $R_{v,p}(h),\; p=0\ldots v$ with integer coefficients such that
\begin{align}
(h\partial_h)^v \frac{1}{\Q }=&
\sum_{p=0}^v \frac{(\ell \sqrt{r})^p R_{v,p}(h)}{(\Q)^{p+1}}.
\end{align}
In fact furthermore we claim that if $v$ is even, then $R_{v,p}$ is a polynomial function of $h+h^{-1}$; and if $v$ is odd, then $R_{v,p}$ equals $h-h^{-1}$ times a polynomial function of $h+h^{-1}$.
It is easy to verify this claim by induction on $v$.  
It follows that
\begin{align}
\prod_{q=0}^{l(\lambda)+l(\eta)} (h\partial_h)^{r_q} \frac{1}{\Q} =& \sum_{p=0}^{\sum v_i} \frac{ (\ell B)^p \wt{R}_{\vec{r},p}(h)}{(\Q)^{p+l(\lambda)+l(\eta)}},
\end{align}
where $\wt{R}_{\vec{v},p}$ is a polynomial function of $h+h^{-1}$ is $\sum v_i$ is even, and $h-h^{-1}$ times such a function otherwise.  The lemma now follows from because
\begin{align}
\partial_{s}^a \partial_{\sqrt{r}}^b \frac{1}{\Q} =& \frac{\ell^{a+b} (h+h^{-1})^b}{(\Q)^{1+a+b}}.
\end{align}
We note that the asserted divisibility by $\partial_{s}^{l(\lambda)+l(\eta)-1}$ is because $\partial_s$ is needed to increase the exponent in the denominator without increasing the degree of the numerator.
\end{proof}

It is convenient to express the right hand side of (\ref{2bi8}) in this form because the following identity will allow us to recombine the even and odd $\sum v_i$ cases in the next calculation.
\begin{align}
\sqrt{r}\partial_s [h^p](h-h^{-1})(\sqrt{r} h + s +\sqrt{r} h^{-1})^{v} =& p [h^p](\sqrt{r} h + s +\sqrt{r} h^{-1})^{v} \label{qniv}
\end{align}
Formula (\ref{qniv}) is a special case of the ``integration by parts'' lemma proven in section \ref{MotzkinIdentSect}.  It follows that
\begin{align}
&\wt{P}^{(a)}_{\lambda,\eta,J}(s,r) \nonumber \\
&=[h^l(\eta) \ell^{J-1-l(\lambda)-l(\eta)}] \frac{1}{r^{l(\eta)/2}} \bigg( \nonumber \\
&\qquad  \sum_{\vec{v}: \;\sum v_i \text{ even}} C_{\vec{v}}\, \ell^{-l(\lambda)-l(\eta)} W_{\vec{v}} (\partial_s, \partial_{\sqrt{r}})\partial_{s}^{l(\lambda)+l(\eta)} \frac{1}{\Q} \nonumber \\
& \qquad+  \sum_{\vec{v}: \; \sum v_i \text{ odd}} C_{\vec{v}}\, \ell^{-l(\lambda)-l(\eta)} W_{\vec{v}} (\partial_s, \partial_{\sqrt{r}}) \partial_{s}^{l(\lambda)+l(\eta)+1} \frac{\sqrt{r}(h-h^{-1})/l(\eta) }{\Q} \bigg) \nonumber \\
&= \frac{1}{r^{l(\eta)/2}} \sum_{\vec{v}} C_{\vec{v}}\, W_{\vec{v}}(\partial_s, \partial_{\sqrt{r}})\partial_{s}^{l(\lambda)+l(\eta)} [h^{l(\eta)}] (\sqrt{r} h + s + \sqrt{r} h^{-1})^{J-1} \nonumber \\
&=  \sum_{\vec{v}} C_{\vec{v}}\, W_{\vec{v}}(\partial_s, \partial_{\sqrt{v}})\partial_{s}^{l(\lambda)}\partial_{r}^{l(\eta)} [h^0] (\sqrt{r} h + s + \sqrt{r} h^{-1})^{J-1} \label{1836} \\
&= P^{(a)}_{\lambda,\eta}(\partial_s,\partial_r) \, [h^0](\sqrt{r} h + s + \sqrt{r}h^{-1})^{J-1}. \label{1837}
\end{align}
To get (\ref{1836}) we applied the ``repeated path lowering identity'' of section \ref{MotzkinIdentSect}.  Clearly, there exists $\wt{P}$ such that the final expression (\ref{1837}) of our formula for string polynomials holds; however by the ``derivative swapping identity'' of section \ref{MotzkinIdentSect}, we can and do choose $\wt{P}$ to have polynomial degree at most one in $\partial_r$.  This completes the proof of lemma \ref{alt str polyn thm}, and thus also theorem \ref{str polyn thm}.

\section{Solution of string equations}

The following valence independent formula was known to Bessis et. al. \cite{BIZ}:
\begin{align}
F^{(0)} =& \int_{0}^{x} \int_{0}^{x_1} \log \frac{z}{x_2} \, \d x_2 \, \d x_1. \label{e0ValIndep}
\end{align}
We are not aware of any closed form evaluation of the integrals in (\ref{e0ValIndep}) which holds for general asymmetric polynomial potentials.  However in the case of a potential $V(\lambda)=\frac{1}{2}\lambda^2 + t \lambda^j$, where $j$ is allowed to be odd, we have found an explicit formula \cite{PatricksDissertation}.  We consider (\ref{e0ValIndep}) to be the valence independent formula for $F^{(0)}$; thus the real work begins at genus 1.  In this section we compute $F^{(1)}$ using the string equations.

\subsection{The unwinding identity} \label{unwinding sect}
Define two sequences of functions by
\begin{align}
\vect{\phi_m}{\psi_m} :=& [h^0]V^{(m+1)} (h +u+z h^{-1}) \vect{1}{h } \label{phi psi 1} \\
=& \vect{\partial_{u}^{m+1}}{z \partial_{u}^{m} \partial_{z}} [h^0] V(h+u+zh^{-1}).
\end{align}
Notice that the above expressions depend explicitly on the potential $V$.  The functions $\phi_m$ and $\psi_m$ are useful because it turns out that they can actually be expressed in a form independent of the potential (``valence independent'') are furthmore satisfy the ``unwinding identity''. 

\begin{lemma}[Valence independence of $\phi_m,\psi_m$ and the unwinding identity]\label{unwinding lemma}
The functions $\phi_m,\psi_m$ are determined by the following recurrence
\begin{align}
 \vect{\phi_{m+1}}{\psi_{m+1}} =&  \frac{1}{(z')^2-z (u')^2} \matr{-z u' }{z'}{z z'}{-z u'} \partial_x  \vect{\phi_m}{\psi_m}, \label{recursion}
\end{align}
along with initial conditions $\phi_0=0$ and $\psi_0 =x$.
For integers $m\geq 1$, we have 
\begin{align}
z' \phi_m + u' \psi_m =&  \psi_{m-1}'  \label{index lowering 1} \\
z u' \phi_m + z' \psi_m =& z  \phi'_{m-1} \label{index lowering 2} 
\end{align}
\end{lemma}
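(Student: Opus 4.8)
The plan is to establish the three assertions of the lemma --- valence independence via the recurrence \eqref{recursion}, and then the two unwinding identities \eqref{index lowering 1}--\eqref{index lowering 2} --- in that order, since the recurrence is the natural engine and the unwinding identities will fall out of it by a telescoping/induction argument.

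First I would prove the recurrence \eqref{recursion}. The key is to differentiate the defining formula
\begin{align*}
\vect{\phi_m}{\psi_m} = [h^0] V^{(m+1)}(h+u+zh^{-1})\vect{1}{h}
\end{align*}
with respect to $x$. By the chain rule, $\partial_x$ brings down a factor $V^{(m+2)}(h+u+zh^{-1}) \cdot (u' + z' h^{-1})$ plus a term from differentiating the explicit $h$ in the lower component (which contributes nothing after taking $[h^0]$ of a pure shift, or is handled by the residue bookkeeping). Writing $w = h+u+zh^{-1}$, one gets expressions of the form $[h^0] V^{(m+2)}(w)(u' + z'h^{-1})\binom{1}{h}$, and these can be re-expressed in terms of $\phi_{m+1},\psi_{m+1}$ together with the ``pole-shifting'' identities $[h^0] h^{-1} G(w) = $ (something involving $[h^1]$ or $[h^{-1}]$), exactly the kind of residue manipulation already used in Section~2 (e.g.\ the integration-by-parts identity \eqref{qniv} and the path-lowering identities). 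The two scalar equations so obtained are linear in $\phi_{m+1},\psi_{m+1}$ with coefficients built from $u', z'$; inverting that $2\times 2$ coefficient matrix produces precisely the matrix $\frac{1}{(z')^2 - z(u')^2}\left(\begin{smallmatrix} -zu' & z' \\ zz' & -zu'\end{smallmatrix}\right)$. I would check the initial conditions directly: $\phi_0 = [h^0] V'(h+u+zh^{-1})$, which is the leading-order string equation \eqref{UnDiff1} and hence $0$; and $\psi_0 = [h^0] h V'(h+u+zh^{-1}) = [h^{-1}]V'(h+u+zh^{-1})$, which is the leading-order form of \eqref{UnDiff2} and hence equals $x$. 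Valence independence is then immediate: once $(\phi_0,\psi_0)=(0,x)$ is known without reference to $V$, the recurrence expresses every $(\phi_m,\psi_m)$ as a universal differential-polynomial in $z,u$ and their $x$-derivatives.

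For the unwinding identities \eqref{index lowering 1}--\eqref{index lowering 2}, I would argue by induction on $m$, using \eqref{recursion} as the inductive step. The base case $m=1$ is a direct computation: apply \eqref{recursion} once to $(\phi_0,\psi_0)=(0,x)$ to get $(\phi_1,\psi_1) = \frac{1}{(z')^2-z(u')^2}(z', zz')\cdot(0,1)^{T}\,$-ish --- more precisely $\phi_1 = \frac{z'}{(z')^2 - z(u')^2}\cdot 1$ after $\partial_x$ hits $\psi_0 = x$ (and similarly $\psi_1 = \frac{zz'}{(z')^2-z(u')^2}$, wait --- one must track $\partial_x x = 1$ carefully) --- and then verify $z'\phi_1 + u'\psi_1 = \psi_0' = 1$ and $zu'\phi_1 + z'\psi_1 = z\phi_0' = 0$ by substitution. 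For the inductive step, assume \eqref{index lowering 1}--\eqref{index lowering 2} hold at level $m$; apply $\partial_x$ to both, then substitute the recurrence \eqref{recursion} for $(\phi_{m+1},\psi_{m+1})$ on the left-hand sides of the level-$(m+1)$ identities and simplify. The algebra should collapse because the matrix in \eqref{recursion} and the matrix $\left(\begin{smallmatrix} z' & u' \\ zu' & z'\end{smallmatrix}\right)$ appearing on the left of \eqref{index lowering 1}--\eqref{index lowering 2} are essentially inverse to each other up to the scalar $(z')^2 - z(u')^2$ and a diagonal twist by $z$; multiplying them out reduces the level-$(m+1)$ claims to the $\partial_x$-derivative of the level-$m$ claims plus lower-order terms that cancel.

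The main obstacle I expect is the residue bookkeeping in the derivation of \eqref{recursion}: correctly handling the $[h^0]$ extraction after $\partial_x$ introduces factors of $h^{-1}$ (so that $[h^0]$ of the new expression picks up $[h^1]$ or $[h^{-1}]$ of the old), and matching these against the $\phi,\psi$ pattern without sign or indexing errors. This is the step where the ``unwinding'' structure is genuinely created, and it relies on the same family of residue identities (integration by parts, path lowering, derivative swapping) that Section~2 defers to Section~\ref{MotzkinIdentSect}; I would invoke those identities explicitly rather than re-deriving them. Once \eqref{recursion} is in hand with the correct coefficient matrix, everything else is linear algebra and induction.
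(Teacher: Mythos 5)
Your proposal follows essentially the same route as the paper: differentiate the defining residue formula for $(\phi_m,\psi_m)$ in $x$, use the path-reflecting identity of the appendix to re-express the resulting coefficients in terms of $\phi_{m+1},\psi_{m+1}$, and invert the resulting $2\times 2$ linear system to obtain the recurrence, with initial conditions coming from the leading-order string equations. The only cosmetic difference is that no separate induction is needed for the unwinding identities: they are exactly the two scalar equations of that same linear system, since $\matr{z'}{u'}{zu'}{z'}$ times the recursion matrix equals $\left((z')^2-z(u')^2\right)\matr{0}{1}{z}{0}$, which is how the paper extracts them (and your tentative value of $\psi_1$ should come out as $-zu'/\left((z')^2-z(u')^2\right)$, after which the $m=1$ check indeed succeeds).
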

In the case $m=1$ the right hand sides of the unwinding identities are $\psi_{0}'=1$ and $z\phi'_{0}=0$.  We will use this in section \ref{solving string} to isolate highest genus terms when solving the string equations.  The case $m=1$ was used in \cite{EP11}, and the proof we give below essentially parallels the calculation there; lemma \ref{unwinding lemma} extends Ercolani and Pierce's idea to $m>1$.  
\begin{proof}
We proceed by induction on $m$.  The base case $m=0$ is the leading order equation of the continuum string hierarchy.  Let $Y=h+u+zh^{-1}$.  
\begin{align}
 \vect{ \partial_x\phi_{m}}{\partial_x\psi_{m}} =& [h^0] V^{(m+2)}(Y) \matr{1}{h^{-1}}{h}{1} \vect{u'}{z'} \label{rec proof 1}
\end{align}
By elementary linear algebraic manipulations, we see that (\ref{rec proof 1}) is equivalent to the following system; we then apply the path reflecting identity, which is lemma \ref{path reflect lemma} of appendix \ref{MotzkinIdentSect}.
\begin{align}
 \vect{ \partial_x\psi_{m} }{z \partial_x \phi_m } =& [h^0] V^{(m+2)}(Y) \matr{1}{z^{-1}h }{zh^{-1} }{1} \vect{z'}{z u'} \nonumber \\
 =& [h^0] V^{(m+2)}(Y) \matr{1}{h^{-1} }{h }{1} \vect{z'}{z u'} \label{rec proof 2}
\end{align}
We now combine (\ref{rec proof 1}) and (\ref{rec proof 2}) to form a matrix equation; once again using the path reflecting identity we find that
\begin{align}
\matr{\partial_x \phi_m}{ \partial_x \psi_m}{\partial_x \psi_m}{z \partial_x \phi_m } 
=&  \matr{u'}{ z^{-1} z'}{z'}{u'}
[h^0] V^{(m+2)}(Y) \matr{1}{h}{h }{z}  \label{proof eq 72}
\end{align}
The first column of the above equation is equivalent to the recurrence (\ref{recursion}) for $\phi_m,\psi_m$.  Rearranging equation (\ref{proof eq 72}) and expressing the series coefficients in terms of $\phi_m$ and $\psi_m$  gives the unwinding identity.
\end{proof}

\subsection{Extraction of generating functions}
So far it has been useful to work in terms of the asymptotic expansions $s= \sum N^{-g} u_g$ and $r = \sum N^{-2g}z_g$.  But to solve for $z_g,u_{2g}$ we will need to extract terms from these expansions.  For this purpose, we have the following formula.  
\begin{lemma}
If $G(s,r)$ is a polynomial, then
\begin{align}
[N^{-p}] G(s,r) =& \sum_{\lambda, \eta:\; \sum \lambda_i + 2 \eta_i =p}  \prod_i \frac{u_{\lambda_i}  z_{\eta_i}}{\lambda_i ! \eta_i !} [N^0] \partial_{s}^{l(\lambda)} \partial_{r}^{ l(\eta) } G(s,r)
\end{align}
\end{lemma}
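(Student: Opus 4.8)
The statement is a combinatorial identity between two ways of organizing the asymptotic expansions $s \sim \sum_g u_g N^{-g}$ and $r \sim \sum_g z_g N^{-2g}$ when substituted into a polynomial $G$. The plan is to reduce to the case where $G$ is a single monomial $s^a r^b$, since both sides are linear in $G$, and then to prove the monomial case by a direct expansion. For a monomial we substitute $s \sim \sum_{g\ge 0} u_g N^{-g}$ and $r \sim \sum_{g\ge 0} z_g N^{-2g}$ into $s^a r^b$, expand the product, and collect the coefficient of $N^{-p}$. A term contributing to $N^{-p}$ corresponds to choosing, for each of the $a$ factors of $s$, an index $g^{(s)}_k \ge 0$, and for each of the $b$ factors of $r$, an index $g^{(r)}_l \ge 0$, subject to $\sum_k g^{(s)}_k + \sum_l 2 g^{(r)}_l = p$. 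The contribution of such a choice is $\prod_k u_{g^{(s)}_k}\prod_l z_{g^{(r)}_l}$.

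The key step is to recognize this sum over index assignments as a sum over pairs of partitions together with a multinomial count. First I would note that only the nonzero indices matter: the indices equal to $0$ simply contribute factors of $u_0 = u$ or $z_0 = z$. So group the chosen indices: let $\lambda$ be the partition formed by the nonzero values among the $g^{(s)}_k$, and $\eta$ the partition formed by the nonzero values among the $g^{(r)}_l$; then the constraint becomes exactly $\sum_i \lambda_i + 2\sum_i \eta_i = p$. For a fixed pair $(\lambda,\eta)$ I would count how many index assignments give rise to it: we must pick which of the $a$ slots carry the parts of $\lambda$ (and in what order) and which of the $b$ slots carry the parts of $\eta$. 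This count, when one also accounts for equal parts, is precisely $\binom{a}{l(\lambda)} l(\lambda)! / (\text{automorphisms of }\lambda) \cdot \binom{b}{l(\eta)} l(\eta)!/(\text{automorphisms of }\eta)$ times the remaining $u^{a-l(\lambda)} z^{b-l(\eta)}$; but it is cleaner to absorb the automorphism factors into the $1/(\lambda_i! \eta_i!)$-style weights only after observing that $\partial_s^{l(\lambda)}\partial_r^{l(\eta)} s^a r^b = \frac{a!}{(a-l(\lambda))!}\frac{b!}{(b-l(\eta))!} s^{a-l(\lambda)} r^{b-l(\eta)}$, and that $[N^0](\cdot)$ of this evaluated object is $\frac{a!}{(a-l(\lambda))!}\frac{b!}{(b-l(\eta))!} u^{a-l(\lambda)} z^{b-l(\eta)}$. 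Matching the falling-factorial combinatorial factor coming from the number of ordered slot-assignments against the derivative count is the crux.

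The main obstacle is the bookkeeping of repeated parts in $\lambda$ and $\eta$. When several parts of $\lambda$ are equal, distinct orderings of the parts among the $s$-slots are overcounted, and the factor $\prod_i \frac{1}{\lambda_i!}$ in the statement is \emph{not} the automorphism correction — it is the Taylor coefficient weight $u_{\lambda_i}/\lambda_i!$ that would appear anyway. One has to check carefully that the number of ways to assign a multiset of nonzero indices to ordered slots, times the Taylor weights, reassembles exactly into $\prod_i \frac{u_{\lambda_i} z_{\eta_i}}{\lambda_i! \eta_i!}$ times the falling factorial, with the automorphism factors of $\lambda$ and $\eta$ cancelling against an overcount in the slot assignment. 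I would handle this by introducing an auxiliary generating-function identity: for each fixed length $\ell$, $\bigl(\sum_{g\ge 1} u_g N^{-g}\bigr)^{\ell}/\ell!$ has $N^{-p}$-coefficient equal to $\sum_{\lambda:\, l(\lambda)=\ell,\ |\lambda|=p} \prod_i \frac{u_{\lambda_i}}{\lambda_i!}\cdot\frac{1}{|\mathrm{Aut}(\lambda)|}\cdot\ell!/\ell!$ — more simply, the exponential-formula statement that $[N^{-p}]\frac{1}{\ell!}(\sum_{g\ge1}u_gN^{-g})^\ell = \sum_{\lambda \vdash_* p,\ l(\lambda)=\ell} \frac{1}{|\mathrm{Aut}(\lambda)|}\prod_i \frac{u_{\lambda_i}}{\lambda_i!}\cdot(\text{something})$; then split $s^a = \sum_{\ell=0}^{a}\binom{a}{\ell} u^{a-\ell}\bigl(\sum_{g\ge1}u_gN^{-g}\bigr)^{\ell}$ and similarly for $r^b$, and multiply. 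Once the two sides are written this way the identity is a term-by-term match, and the appearance of $\partial_s^{l(\lambda)}\partial_r^{l(\eta)}G$ with the $[N^0]$ extraction is exactly what repackages $\binom{a}{\ell}u^{a-\ell}\ell!$ into $[N^0]\partial_s^\ell G$.
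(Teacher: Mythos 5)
Your overall route is the same as the paper's: reduce by linearity to a monomial $s^a r^b$, substitute the expansions of $s$ and $r$, group the choices of nonzero indices into a pair of partitions $(\lambda,\eta)$, and match the slot count against $[N^0]\partial_s^{l(\lambda)}\partial_r^{l(\eta)}s^a r^b=\frac{a!}{(a-l(\lambda))!}\frac{b!}{(b-l(\eta))!}u^{a-l(\lambda)}z^{b-l(\eta)}$. But the step you yourself single out as the crux is handled incorrectly. There is no ``Taylor coefficient weight $u_{\lambda_i}/\lambda_i!$ that would appear anyway'': by the definitions in (\ref{rnExpand})--(\ref{snExpand}), $u_g$ and $z_g$ are the literal coefficients of $N^{-g}$ and $N^{-2g}$ in the series for $s$ and $r$, so no factorial of a part value ever enters the expansion. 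Carrying out your own count honestly, the number of ways to place the multiset of nonzero indices into the $a$ $s$-slots and $b$ $r$-slots is $\frac{a!}{(a-l(\lambda))!}\frac{b!}{(b-l(\eta))!}$ divided by the product of multiplicity factorials of $\lambda$ and of $\eta$; the falling factorials cancel exactly against the derivative count, and what survives as the weight is $1/(|\mathrm{Aut}(\lambda)|\,|\mathrm{Aut}(\eta)|)$, i.e.\ reciprocal multiplicity factorials --- not $\prod_i 1/(\lambda_i!\,\eta_i!)$. The hoped-for ``cancellation of automorphism factors against an overcount in the slot assignment'' does not occur; the automorphism factors \emph{are} the final weights, and your plan as written cannot reproduce the part-value factorials.

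The discrepancy is not cosmetic: already at $p=2$ your computation yields $u_2\partial_s+z_1\partial_r+\tfrac12 u_1^2\partial_s^2$, whereas the formula with part-value factorials would give $\tfrac12 u_2\partial_s+z_1\partial_r+u_1^2\partial_s^2$ (test $G=s$, $p=2$: the literal right-hand side is $u_2/2$, not $u_2$). The version your count actually produces is the one the paper uses immediately after the lemma in evaluating $[N^{-2}h^0]V'(\mpsym)$, so the factorials $\lambda_i!\,\eta_i!$ in the printed statement must be read as multiplicity (automorphism) factorials; as literally printed the identity is false. Once you replace your justification of the $1/\lambda_i!$ factors by this observation, the rest of your argument closes and coincides with the paper's (much terser) proof; the auxiliary split $s^a=\sum_{\ell}\binom{a}{\ell}u^{a-\ell}\bigl(\sum_{g\ge 1}u_g N^{-g}\bigr)^{\ell}$ is harmless but unnecessary.
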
 \label{genus extract lemma}
\begin{proof} Consider the case where $G$ is a monomial $s^p r^q$; we have a contribution corresponding to each pair tuples $(g_\alpha)_{\alpha=1}^p$ and $(g'_\alpha)_{\alpha=1}^q$ which uniquely determine a way of choosing a terms $u_{g_\alpha} N^{-g_\alpha}$ from each factor $s$ and $z_{g'_{\alpha}} N^{-2g'_{\alpha}}$ from each factor $r$.  The partitions $\lambda,\eta$ are the multisets corresponding to the tuples $g,g'$.  The number of tuples corresponding to a partition in this way can be expressed in terms of factorials in such a way that we arrive at the lemma statement.
\end{proof}
\noindent The sum is over partitions $\lambda$ and $\eta$, and by $l(\lambda)$ we mean the number of parts of the partition $\lambda$.  Thus we are able to calculate that
\begin{align}
[N^{-1} h^0] V'(\mpsym) =& u_1 \partial_s [h^0] V'(\mpsym) \nonumber \\
=& u_1 \phi_1, 
\end{align}
and also that
\begin{align}
[N^{-2} h^0] V'(\mpsym) =& \left( z_1 \partial_{r} + u_2 \partial_s + \frac{1}{2} u_{1}^2 \partial_{s}^2 \right) [h^0]V'(\mpsym) \nonumber \\
=& z_1 \psi_1/z + u_2 \phi_1 + \frac{1}{2}u_{1}^2 \phi_2.
\end{align}

\subsection{Example: solution at order $N^{-2}$} \label{string e1 sect}
Using our table of string polynomials, we can now write down the equations at order $N^{-2}$ from the continuum string hierarchy.  Consider the first string equation $0=V'(L)_{n,n}$; at order $N^{-2}$ we have
\begin{align}
0= & \bigg\{ [N^0] \bigg( s'' \frac{1}{6} r \partial_s \partial_{r} 
+ (s')^2 \frac{1}{12} r \partial_{s}^2  \partial_{r} 
+ s' r' \frac{1}{6} \partial_{s}^3 \nonumber \\
&\qquad + r'' \left( \frac{1}{6} \partial_{s}^2 + \frac{1}{12}  \partial_{r} \right) 
+ (r')^2 \left( \frac{1}{12} r^{-1} \partial_{s}^2 - \frac{1}{12} r^{-1} \partial_{r}  + \frac{1}{12} \partial_{s}^2 \partial_{r} \right) \bigg) \nonumber \\
& \qquad + [N^{-1}] r' \frac{1}{2} \partial_{r} + [N^{-2}] \bigg\}\;\cdot [h^0]V'(\mpsym) \nonumber \\
=& u'' \frac{1}{6} \psi_2 
+ (u')^2 \frac{1}{12} \psi_3 
+ u' z' \frac{1}{6} \phi_3 \nonumber \\
&\;\;+ z''\left( \frac{1}{6} \phi_2 + \frac{1}{12} z^{-1}\psi_1 \right) 
+ (z')^2 \left( \frac{1}{12} z^{-1} \phi_2 - \frac{1}{12} z^{-2} \psi_1 + \frac{1}{12} z^{-1} \psi_3 \right)\nonumber  \\
&\;\; + z' \frac{1}{2} z^{-1} u_1 \psi_2 + z_1 z^{-1} \psi_1 + u_2 \phi_1 + u_{1}^2 \frac{1}{2} \phi_2 \label{str 2s}
\end{align}
Proceeding similarly with the second string equation $x=V'(L)_{n,n-1}$ we obtain
\begin{align}
0=&  u'' \left(\frac{1}{6}z \phi_2 + \frac{1}{12} \psi_1\right) + (u')^2 \left( \frac{1}{12} z \phi_3 + \frac{1}{12} \psi_2 \right) + u' z' \frac{1}{6} \psi_3 \nonumber  \\
& \qquad + z'' \frac{1}{6}\psi_2 + (z')^2 \left( \frac{1}{12} \phi_3 -\frac{1}{12}z^{-1} \psi_2 \right) \nonumber  \\
& \qquad +u_{1}'\left(-\frac{1}{2}\psi_1 \right)+u' \left(-\frac{1}{2} u_1 \psi_2 \right) + u_2 \psi_1 + \frac{1}{2}u_{1}^2 \psi_2 + z_1 \phi_1 \label{str 2d}
\end{align}
We can solve for $z_1$ in terms of lower genus quantities by taking the linear combination of equations $ zu' (\ref{str 2s}) + z' (\ref{str 2d})$; this combination allows us to isolate $z_1$ using identity $\ref{index lowering 1})$.  We obtain
\begin{align}
-z_1=&\psi_3 \left(\frac{1}{4} u' \left(z'\right)^2+\frac{1}{12} z \left(u'\right)^3\right)
+\phi_3 \left(\frac{1}{4} z \left(u'\right)^2
   z'+\frac{\left(z'\right)^3}{12}\right) \nonumber \\
&+\psi_2 \left(\frac{1}{12} \left(u'\right)^2 z'+\frac{1}{6} z u' u''+\frac{1}{2} u_1^2 z'-\frac{\left(z'\right)^3}{12 z}+\frac{z' z''}{6}\right) \nonumber \\
&+\phi_2 \left(\frac{1}{6} z u'' z'+\frac{1}{2} u_1^2 z
   u'+\frac{1}{6} z u' z''+\frac{1}{12} u' \left(z'\right)^2\right) \nonumber \\
& +\psi_1 \left(\frac{u'' z'}{12}+\frac{u' z''}{12}-\frac{u' \left(z'\right)^2}{12 z}-\frac{1}{2} u_1' z'\right)  \label{z1 1}
\end{align}
The appearantly complicated expression above simplifies in a fantastic way due to identities (\ref{index lowering 1}) and (\ref{index lowering 2}), which ``unwind'' the recurrence defining $\phi_m$ and $\psi_m$.  We would like to calculate $F^{(1)}$, which is related to $z_1$ by $F^{(1)} = \partial_{x}^{-2}(z_1/z)-\frac{1}{12} \log (z/x)$.  It would therefore be nice if (\ref{z1 1})$/z$ could be grouped as a double exact derivative.  Toward this purpose we recognize the following simplifications of blocks of terms that will appear in the equation (\ref{z1 1})$/z$:
\begin{align}
&\psi_3 \left(\frac{u' \left(z'\right)^2}{4z} +\frac{1}{12} \left(u'\right)^3 \right)
+\phi_3 \left(\frac{1}{4}  \left(u'\right)^2
   z'+\frac{\left(z'\right)^3}{12z}\right) \nonumber \\
&\qquad +\phi_2 \left( \frac{1}{6}  u'' z' + \frac{1}{6}  u' z'' \right)
+ \psi_2 \left( -\frac{\left(z'\right)^3}{12 z^2}+\frac{z' z''}{6z}+\frac{1}{6}  u' u'' \right) \nonumber \\
&= \partial_x \left( \frac{1}{6} \phi_2 u' z' + \psi_2 \left( \frac{1}{12}(u')^2 + \frac{(z')^2}{12z} \right)\right) \\
&= \partial_x \left( \frac{1}{12}\psi'_{1}u' + \frac{1}{12}\phi'_1 z' \right)
\end{align}
This simplification was achieved by applying identities (\ref{index lowering 1}) and (\ref{index lowering 2}) to the $\phi_3$ and $\psi_3$ terms; collecting exact derivatives; then once again applying (\ref{index lowering 1}) and (\ref{index lowering 2}).  Similarly,
\begin{align}
& \psi_2\left( \frac{u_{1}^2 z'}{2z} + \frac{(u')^2 z'}{12 z} \right) +\phi_2 \left( \frac{1}{2} u_1^2  u' + \frac{ u'(z')^2}{12z} \right) \nonumber \\
& \qquad + \phi_1 u_1 u_{1}' + \psi_1 \left( \frac{u'z''}{12 z} + \frac{u''z'}{12z} - \frac{u'(z')^2}{12 z^2} \right) \nonumber \\
& = \partial_x \left( \frac{1}{2} \phi_1 u_{1}^2 + \psi_1 \frac{u'z'}{12z} \right) \nonumber 
\end{align}
Making these simplifications in (\ref{z1 1}); then using $u_1 = \frac{1}{2}u'$ and yet again identity (\ref{index lowering 2}), we obtain
\begin{align}
-\frac{z_1}{z}=& \partial_x \left( \frac{1}{12}\psi'_{1}u' + \frac{1}{12}\phi'_1 z' \right)
+\partial_x \left( \frac{1}{2} \phi_1 u_{1}^2 + \psi_1 \frac{u'z'}{12z} \right)
- \phi_1 u_1 u'_{1} - \psi_{1} \frac{z' u_{1}'}{2z} \nonumber \\
=& \partial_x \left( -\frac{1}{12} \psi_1 u'' -\frac{1}{12}\phi_1 z'' +\frac{1}{2} \phi_1 u_{1}^2 + \psi_1 \frac{u'z'}{12z} \right) \\
=& \partial_x \left( \phi_1 \left( \frac{(u')^2}{24} - \frac{z''}{12} \right) -\psi_1 \frac{u''}{12} \right) \nonumber \\
=& \frac{1}{24} \partial_x  \frac{(u')^2 z' - 2z' z'' + 2 z u' u''}{(z')^2-z(u')^2}  \nonumber \\
=& -\frac{1}{24} \partial_{x}^2 \log \left[ (z')^2 -z (u')^2 \right]
\end{align}
Finally we obtain the generating function for maps of genus 1:
\begin{align}
F^{(1)} =& \partial_{x}^{-2} \frac{z_1}{z} -\frac{1}{12} \log \frac{z}{x} \nonumber \\
=&\frac{1}{24} \log \left[ (z')^2 -z (u')^2 \right] -\frac{1}{12} \log\frac{z}{x}.
\end{align}

\subsection{Recursive calculation of $z_g$ and $u_g$}
From lemma \ref{genus extract lemma}, it is clear that the highest genus terms in the string equations at order $N^{-2g}$ will be
\begin{align*}
\phi_1 u_{2g} + z^{-1}\psi_1 z_g \qquad \text{in equation }0=[N^{-2g}]V'(L)_{n,n} \\
\psi_1 u_{2g} + \phi_1 z_g \qquad \text{in equation }0=[N^{-2g}]V'(L)_{n,n-1} 
\end{align*}
Thus by the unwinding identity we can take combinations of equations to solve for the highest genus terms.
\begin{align}
z u' [N^{-2g}]V'(L)_{n,n} + z'[N^{-2g}]V'(L)_{n,n-1} \qquad \text{isolates }z_g \label{iso 036}\\
z' [N^{-2g}]V'(L)_{n,n} + u'[N^{-2g}]V'(L)_{n,n-1} \qquad \text{isolates }u_{2g} \label{iso 037}
\end{align}
To calculate expressions for the functions $u_{2g+1}$ in terms of lower genus quantities we do not need the string equations.  It was shown in section 4.2.3 of \cite{PatricksDissertation} that
\begin{align*}
u_{2g+1} =& -\sum_{m=1}^{2g+1} \frac{B_m}{m!} \partial_{x}^m u_{2g+1-m},
\end{align*}
and therefore the odd index functions $u_{2g+1}$ are actually redundant in the sense that by applying the above identity several times $u_{2g+1}$ can be expressed purely in terms of lower genus $u_{2g}$ with even index.  (See \cite{PatricksDissertation}) for an explanation of the relation between $u_{2g}$ and generating functions for maps with legs.)

This proves that the functions $z_g$ and $u_{2g}$ can be recursively calculated from the string equations.

\begin{definition}
The polynomial degree of $z$ or any $x$-derivative $z^{(m)}$ is $1$.  The polynomial degree of $u$ or any $x$-derivative $u^{(m)}$ is $\frac{1}{2}$.  The differential weight of any $x$-derivative $u^{(m)}$ or $z^{(m)}$ is $m$.  
\end{definition}
\begin{definition}
A polynomial in $u,z',u',z',u'',\ldots$ is homogeneous if all its terms have the same polynomial degree and the same differential weight.  The polynomial degree of a product (or quotient) of homogeneous polynomials is given by adding (or subtracting) the polynomial degrees, and similarly for differential weights of products and quotients.
\end{definition}
Notice that the polynomial degree and differential weight are only defined for the subset of the rational functions in $u,z',u',z',u'',\ldots $ satisfying a homogeneity property.  Recall from \cite{EMcLP08,EP11,PatricksDissertation} the scaling relations
\begin{align}
z_g(x,t)=&x^{1-2g}z_g\left(1,(x^{j/2-1}t_j)_{j=1}^d\right), \label{zScale}\\
u_g(x,t)=&x^{1/2-g}u_g\left(1,(x^{j/2-1}t_j)_{j=1}^d\right) \label{uScale}.
\end{align}
A quantity with polynomial degree $p$ and differential weight $m$ has scaling structure $ x^{p-m} G\left( (z^{j/2-1}t_j)_{j=1}^d\right)$.  Thus if one knows any two of the quantities polynomial degree, differential weight, and scaling structure, then the third is uniquely determined.
\begin{theorem} \label{zg structure thm1}
From the string equations, we deduce that there exist formulas of the following forms for the functions $z_g$ and $u_g$:
\begin{align}
z_g =& \frac{P_{z_g}(u,z,u',z',u'',\ldots)}{\left( (z')^2-z (u')^2\right)^{8g-3}} \\
u_{2g} =& \frac{P_{u_{2g}}(u,z,u',z',u'',\ldots)}{\left( (z')^2-z (u')^2\right)^{8g-3}} \\
u_{2g+1} =& \frac{P_{u_{2g+1}}(u,z,u',z',u'',\ldots)}{\left( (z')^2-z (u')^2\right)^{8g-2}} 
\end{align}
where the functions $P_{z_g}$, $P_{u_{2g}}$ and $P_{u_{2g+1}}$ are polynomials in $x$-derivatives of $u$ and $z$, homogeneous with respect to polynomial degree and differential weight.  The differential weight of $z_g$ is $2g$, and the polynomial degree of $z_g$ is $1$.  The differential weight of $u_{2g}$ is $2g$ and the polynomial degree of $u_{2g}$ is $\frac{1}{2}$.  The differential weight of $u_{2g+1}$ is $2g+1$ and the polynomial degree of $u_{2g+1}$ is $\frac{1}{2}$.
\end{theorem}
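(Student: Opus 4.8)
The plan is to argue by induction on $g$. The base case $g=1$ is the explicit computation of section~\ref{string e1 sect}, where $z_1$ and $u_2$ visibly can be written over $\left((z')^2-z(u')^2\right)^{5}$ and $u_3$ over the sixth power (the denominators obtained there are in fact smaller, so these bounds are comfortable). Throughout, write $D:=(z')^2-z(u')^2$ for the distinguished denominator; since $(z')^2$ and $z(u')^2$ each have polynomial degree $2$ and differential weight $2$, $D$ is homogeneous of polynomial degree $2$ and differential weight $2$. I would first observe that the theorem splits into (i) the claim that $z_g,u_{2g},u_{2g+1}$ are rational functions of $u,z$ and their $x$-derivatives with denominators the stated powers of $D$, and (ii) homogeneity together with the values of the polynomial degrees and differential weights. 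Part (ii) is essentially automatic given (i): the scaling relations (\ref{zScale})--(\ref{uScale}) force the scaling structure $x^{1-2g}G(\cdot)$ (resp.\ $x^{1/2-g}G(\cdot)$), and a homogeneous quantity of polynomial degree $p$ and differential weight $m$ has scaling structure $x^{p-m}G(\cdot)$, so once each term appearing in the recursion is seen to be homogeneous, matching scaling structures pins down $p$ and $m$ uniquely (the remark following the two definitions preceding the theorem). So the substance is part (i), plus checking that the recursion produces only terms of one common polynomial degree and differential weight.

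For the inductive step I would use the isolation combinations (\ref{iso 036})--(\ref{iso 037}). Expanding $[N^{-2g}]V'(L)_{n,n}$ and $[N^{-2g}]V'(L)_{n,n-1}$ via the string-polynomial theorem~\ref{str polyn thm}, then extracting the order-$N^{-2g}$ part with lemma~\ref{genus extract lemma}, yields on each right-hand side a polynomial in: the functions $\phi_m,\psi_m$ of section~\ref{unwinding sect} for $m$ up to some bound $M(g)$; $x$-derivatives of $u$ and $z$; and the lower-genus $z_{g'},u_{g'}$, $g'<g$, with their $x$-derivatives. The highest-genus pieces are exactly the linear terms $\phi_1 u_{2g}+z^{-1}\psi_1 z_g$ and $\psi_1 u_{2g}+\phi_1 z_g$ recorded just before (\ref{iso 036}); by the $m=1$ unwinding identities $z'\phi_1+u'\psi_1=\psi_0'=1$ and $zu'\phi_1+z'\psi_1=z\phi_0'=0$, the combination $zu'(\cdot)_{n,n}+z'(\cdot)_{n,n-1}$ produces $z_g$ with coefficient $1$ and kills $u_{2g}$, while $z'(\cdot)_{n,n}+u'(\cdot)_{n,n-1}$ produces $u_{2g}$ with coefficient $1$ and kills $z_g$. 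Transposing the remaining terms then expresses $z_g$ (resp.\ $u_{2g}$) in terms of quantities already under control.

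Next I would pin down the form of those remaining terms. The recursion (\ref{recursion}) of lemma~\ref{unwinding lemma}, started from $\phi_0=0$, $\psi_0=x$, gives by a sub-induction on $m$ that for $m\ge1$ the functions $\phi_m,\psi_m$ are homogeneous rational functions with denominator exactly $D^{2m-1}$ (each step multiplies by $D^{-1}$ and then applies $\partial_x$, raising the $D$-power by a further unit), of differential weight $-1$ and polynomial degrees $-(m+1)/2$ and $-m/2$; in particular $\partial_x\phi_m,\partial_x\psi_m$ lie over $D^{2m}$. Feeding this, together with the inductive hypothesis $z_{g'}=P/D^{8g'-3}$, $u_{2g'}=P/D^{8g'-3}$, $u_{2g'+1}=P/D^{8g'-2}$, into the isolated equations, and using that each $\partial_x$ raises the $D$-exponent by one, every term becomes a polynomial in $x$-derivatives of $u,z$ over a power of $D$; tallying the worst case gives the exponent $8g-3$. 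The polynomial-degree/differential-weight accounting is purely additive ($\partial_x$ adds $1$ to the weight; $D^{-1}$ subtracts $2$ from each; products add), and running it through each term shows the isolated $z_g$-equation has all terms of polynomial degree $1$ and differential weight $2g$, and similarly for $u_{2g}$, which gives homogeneity and, with part (ii), the stated values. For the odd-index functions I would not use the string equations at all: the identity $u_{2g+1}=-\sum_{m=1}^{2g+1}\tfrac{B_m}{m!}\partial_x^m u_{2g+1-m}$ of section~4.2.3 of \cite{PatricksDissertation} rewrites $u_{2g+1}$ through $x$-derivatives of the already-treated $u_{2g'}$, $g'\le g$; since $\partial_x(P/D^{8g'-3})$ lies over $D^{8g'-2}$, a divisor of $D^{8g-2}$, and the differential weight picks up exactly one unit, one gets $u_{2g+1}=P_{u_{2g+1}}/D^{8g-2}$, homogeneous of polynomial degree $\tfrac12$ and differential weight $2g+1$.

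The hard part will be the uniform denominator bound: showing the exponent of $D$ grows only linearly in $g$, i.e.\ that merely $\phi_m,\psi_m$ with $m\le M(g)$ for some linear $M(g)$ enter at order $N^{-2g}$, and that the $D$-powers assembled from the string polynomials, the genus-extraction lemma, and the inductive data never overrun $8g-3$. This comes down to a combinatorial estimate on the lengths of the partitions $\lambda,\eta$ indexing the nonzero string polynomials that survive the order-$N^{-2g}$ extraction; once that is in place, the rest is the same routine degree-and-weight bookkeeping carried out by hand for $g=1$ in section~\ref{string e1 sect}.
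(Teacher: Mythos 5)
Your outline follows the same route as the paper: induction on $g$ with base case the explicit $g=1$ computation of section \ref{string e1 sect}; isolation of $z_g$ and $u_{2g}$ via the combinations (\ref{iso 036})--(\ref{iso 037}) and the $m=1$ unwinding identities; homogeneity and the values of the polynomial degree and differential weight obtained by combining the scaling relations (\ref{zScale})--(\ref{uScale}) with the weight count (your finer claims about $\phi_m,\psi_m$ --- denominator exponent $2m-1$, weight $-1$, degrees $-(m+1)/2$ and $-m/2$ --- are correct and consistent with what the paper uses); and the odd-index functions handled through the Bernoulli-number identity, where your explicit check that the worst term is $m=1$, giving exponent $8g-2$, is actually spelled out more carefully than in the paper.

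However, there is a genuine gap at exactly the point you defer to the last paragraph. The exponent $8g-3$ is the quantitative heart of the theorem, and your argument never produces it: you assert that ``tallying the worst case gives the exponent $8g-3$'' and then concede that the needed ``combinatorial estimate on the lengths of the partitions $\lambda,\eta$'' is missing. That estimate is not routine bookkeeping; it is the step that controls how large the index $m$ of $\phi_m$ or $\psi_m$ can be in a term of the order-$N^{-2g}$ equations, and hence how large a power of $D=(z')^2-z(u')^2$ can be generated, since each $\phi_m,\psi_m$ alone already contributes $D^{-(2m-1)}$. The paper closes this by bounding the total $\partial_s,\partial_r$-degree of the string polynomials, $\deg_{\partial_s,\partial_r}P^{(a)}_{\lambda,\eta}\leq |\lambda|+|\eta|+\text{len}(\lambda)+\text{len}(\eta)$, which comes from the explicit construction in lemma \ref{Wr lemma} (the degree of $W_{\vec{v}}$ is at most $|\vec{v}|\leq |\lambda|+|\eta|$, and reduction modulo the ideal $I$ does not increase the degree). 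Feeding this into lemma \ref{genus extract lemma} shows that a forcing term of the shape (\ref{crazy term 302}) carries a factor $\phi_m$ or $\psi_m$ with $m\leq K+K'+L-1+\sum m_i+m'_i$, so by induction its denominator exponent is at most $2\left(K+K'+L-1+\sum m_i+m'_i\right)-1+\sum\left(8g_i-3+8g'_i-3\right)$, and one must then maximize this subject to the genus constraint $2g=\sum 2g_i+2g'_i+m_i+m'_i$ to see that the supremum is $8g-3$ (attained by terms of the form $(u')^p(z')^{-p+\sum m_i}$). Without this chain --- degree bound on the string polynomials, translation into a bound on $m$, and the constrained maximization --- the claimed denominator exponent is unsupported and the induction does not close.
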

Let us comment that the above theorem statement is not sharp; consider for example the case of $z_1$: the formula we obtained has a denominator $((z')^2-z(u')^2)^2$, but the theorem statement gives $((z')^2-z(u')^2)^5$.  This does not contradict the theorem statement since the numerator and denominator are allowed to have common factors.  This reason for the failure of sharpness is that each application of the unwinding identity cancels a common factor of $(z')^2-z(u')^2$ from numerator and denominator.  From the calculation of $F^{(1)}$ in section \ref{string e1 sect} we see that the possibility of applying the unwinding identity is sensitive to the exact values of the string polynomials; since our formula for string polynomials is quite complicated, significant new ideas would be required to prove a sharp version of theorem \ref{zg structure thm1}.  

\begin{proof}[Proof of theorem \ref{zg structure thm1}]
The case $g=1$ is clear from the explicit calculation in section \ref{string e1 sect}.  Proceed by induction on $g$, assuming the theorem statement for all genus values less than $g$.
First consider the right hand sides of the string equation $0=[N^{-2g}]V'(L)_{n,n}$.  Expressed in terms of the contributions of Motzkin paths, this is
\begin{align}
[N^{-2g}]V'(L)_{n,n}=& \sum_{j=1}^{d} (j t_j+\delta_{j=2}) \sum_{p\in M^{(j-1)}_{0,0}} [N^{-2g}] C(p) \label{anrp}
\end{align}
where $C(p)$ is defined in equation \ref{MPformula135}.  From the scaling relations (\ref{zScale}-\ref{uScale}), it is clear that every term in this sum has scaling structure
\begin{align}
x^{1/2-2g}G\left((x^{j/2-1}t_j)_{j=1}^d \right). \label{proof scale1}
\end{align}
We express equation (\ref{anrp}) in the form guaranteed by theorem \ref{str polyn thm}, then reduce all odd index functions $u_{2h+1}$ in terms of even index functions $u_{2h'}$, and in each term apply one of the identities
\begin{align*}
[N^0 h^0] \partial_{s}^m V'(h+s+ r h^{-1}) =& \phi_{m+1} \\
[N^0 h^0] \partial_{s}^{m-1} \partial_{\sqrt{r}} V'(h+s+ r h^{-1}) =& \psi_{m+1}. 
\end{align*}
The quantity $[N^{-2g}] V'(L)_{n,n}$ is a sum of terms of the form
\begin{align}
& u_{2g_1}\ldots u_{2g_k} z_{g_{k+1}} \ldots z_{g_{K}} 
u^{(m'_1)}_{2g'_1} \ldots u^{(m'_{k'})}_{2g'_{k'}} 
z^{(m'_{k'+1})}_{g'_{k'+1}} \ldots z^{(m'_{K'})}_{g'_{K'}} \\
&\;\;\times u^{(m_1)} \ldots u^{(m_l)} z^{(m_{l+1})} \ldots z^{(m_L)}
z^{r} \left( \phi_m \text{ or }\psi_m \right), \label{crazy term 302}
\end{align}
where 
\begin{align*}
2g=\sum 2g_i +2g'_i +m_i +m'_i \label{constraint 254}.
\end{align*}

 
The highest genus terms are
$
u_{2g}\phi_1 +z_g z^{-1} \psi_1,
$
but our inductive hypotheses apply to all other terms, which we will call forcing terms.  Each forcing term has the scaling structure in equation (\ref{proof scale1}), and by our inductive hypothesis, differential weight $2g-1$; also it is clear from the recurrence (\ref{recursion}) that both $\phi_m$ and $\psi_m$ have differential weight $-1$.  
Thus it follows that each term has polynomial degree $-\frac{1}{2}$.  One can see by the same argument that all forcing terms in the equation $0=[N^{-2g}]V'(L)_{n,n-1}$ have polynomial degree zero and differential weight $2g-1$.  By taking the linear combinations of equations given by (\ref{iso 036}), (\ref{iso 037}) we have that $z_g$ and $u_g$ are homogeneous with differential weight and polynomial degree as given in the theorem.

It remains only to bound the power of $(z')^2-z(u')^2$ appearing in denominators.  
From the equations at the end of section \ref{string general case sect} we have
\begin{align*}
P^{(a)}_{\lambda,\eta} (\partial_s,\partial_r)\equiv \sum_{\vec{v}} C_{\vec{v}} W_{\vec{v}}(\partial_s,\partial_{\sqrt{r}}) \partial_{s}^{l(\lambda)}\partial_{r}^{l(\eta)} & \text{mod }I,
\end{align*}
and the ``mod $I$'' cannot increase the total degree in $\partial_{s},\partial_r$.
 The sum is indexed by a tuple $\vec{v}$, where the total $|\vec{v}|:=\sum_{i} v_i$ is less than or equal to $|\lambda| + | \eta|$.  
From the proof of lemma \ref{Wr lemma}, it is clear that the degree of the polynomial $W_{\vec{v}}$ is at most $|\vec{v}|$ if this quantity is even, and $|\vec{v}|-1$ otherwise.  Therefore the degree of $P^{(a)}_{\lambda,\eta}(\partial_s,\partial_r)$ will be bounded by
\begin{align}
\text{deg}_{\partial_s,\partial_r}P^{(a)}_{\lambda,\eta}(\partial_s,\partial_r) \leq |\lambda|+|\eta|+\text{len}(\lambda) + \text{len}(\eta). 
\end{align}
Combining this estimate with lemma \ref{genus extract lemma}, we see that the index $m$ of $\phi_m$ or $\psi_m$ in (\ref{crazy term 302}) will be at most
\begin{align*}
K+K'+L-1+\sum m_i +m'_i
\end{align*}
Since $\phi_m$ and $\psi_m$ have denominator exponent $2m-1$, and are substituted into the string equations according to
\begin{align*}
\phi_m =& \partial_{s}^m [h^0] V'(h+s+r h^{-1}), \\
\psi_m =& \partial_{s}^{m-1}\partial_{\sqrt{r}} [h^0] V'(h+s+ r h^{-1}),
\end{align*}
and by our inductive hypothesis, the denominator exponent of forcing terms (\ref{crazy term 302}) will be at most
\begin{align*}
& 2\left(K+K'+L-1+\sum m_i +m'_i\right)-1 + \sum\left( 8g_i-3 + 8g'_i-3\right). 
\end{align*}
This is clearly maximized by terms of either the form $(u_{2h} \text{ or }z_h)(u')^p(z')^{-p+\sum m_i}$, for which the bound is $8h-4+4\sum m_i$; or terms of the form $(u')^p(z')^{-p+\sum m_i}$ for which the bound is $8h-3$.  By the constraint (\ref{constraint 254}), this is maximized in the $8h-3$ case.  One can derive the same bound on the denominator exponents of forcing terms in the equation $0=[N^{-2g}]V'(L)_{n,n-1}$.  Taking linear combinations of equations to isolate $z_g$ and $u_{2g}$ does not increase the denominator exponent.  
\end{proof}

\subsection{Formula for $F^{(2)}$}
Following the procedure used in section \ref{string e1 sect}, one can compute $F^{(2)}$; however it is not feasible to perform this calculation by hand.  Using a computer we have computed $z_2$ by solving the string equations.  We then computed $F^{(2)}$ using the asymptotic Hirota-Szego relation
\begin{align}
F^{(2)}=& \partial_{x}^{-2}\left(\frac{z_2}{z}-\frac{z_{1}^2}{z^2}\right) -\frac{1}{12} \left(\frac{z_1}{z}\right) + \frac{1}{240} \partial_{x}^2 \log \frac{z}{x}
\end{align}
We find that the double integral $\partial_{x}^{-2}$ term can be computed in closed form.  This is surprising, and we do not believe it to be a coincidence; however we are unable to prove that this will persist for higher values of $g$.  We find that
\begin{align}
F^{(2)}=& \frac{1}{240 x^2}+ \frac{1}{5760 z^2 \left(z \left(u'\right)^2-\left(z'\right)^2\right)^4}\bigg( \nonumber \\
&-24 \left(z'\right)^{10}+96 z \left(u'\right)^2 \left(z'\right)^8+24 z z'' \left(z'\right)^8-8 z^2 z^{(3)} \left(z'\right)^7 -144 z^2 \left(u'\right)^4 \left(z'\right)^6 \nonumber \\
& -84 z^3
   \left(u''\right)^2 \left(z'\right)^6+6 z^2 \left(z''\right)^2 \left(z'\right)^6
-96 z^2 \left(u'\right)^2 z'' \left(z'\right)^6
-120 z^3 u' u^{(3)} \left(z'\right)^6\nonumber \\
& +20 z^3 z^{(4)}\left(z'\right)^6
-384 z^3 \left(u'\right)^3 u'' \left(z'\right)^5
+384 z^3 u' u'' z'' \left(z'\right)^5 \nonumber \\
& -84 z^4 u'' u^{(3)} \left(z'\right)^5
+172 z^3 \left(u'\right)^2 z^{(3)}\left(z'\right)^5
-84 z^3 z'' z^{(3)} \left(z'\right)^5 \nonumber \\
&-40 z^4 u' u^{(4)} \left(z'\right)^5
+15 z^3 \left(u'\right)^6 \left(z'\right)^4
+64 z^3 \left(z''\right)^3 \left(z'\right)^4 \nonumber \\
& -638 z^4 \left(u'\right)^2 \left(u''\right)^2 \left(z'\right)^4
-340 z^3 \left(u'\right)^2 \left(z''\right)^2 \left(z'\right)^4
+451 z^3 \left(u'\right)^4 z'' \left(z'\right)^4 \nonumber \\
&+192 z^4\left(u''\right)^2 z'' \left(z'\right)^4
+48 z^4 \left(u'\right)^3 u^{(3)} \left(z'\right)^4
+252 z^4 u' z'' u^{(3)} \left(z'\right)^4\nonumber \\
&+252 z^4 u' u'' z^{(3)} \left(z'\right)^4
-20 z^4\left(u'\right)^2 z^{(4)} \left(z'\right)^4
-256 z^5 u' \left(u''\right)^3 \left(z'\right)^3\nonumber \\
& -768 z^4 u' u'' \left(z''\right)^2 \left(z'\right)^3
+1152 z^4 \left(u'\right)^3 u'' z''\left(z'\right)^3
-168 z^5 \left(u'\right)^2 u'' u^{(3)} \left(z'\right)^3 \nonumber \\
&-152 z^4 \left(u'\right)^4 z^{(3)} \left(z'\right)^3
-168 z^4 \left(u'\right)^2 z'' z^{(3)} \left(z'\right)^3
+80 z^5 \left(u'\right)^3 u^{(4)} \left(z'\right)^3\nonumber \\
& -7 z^4 \left(u'\right)^8 \left(z'\right)^2
+384 z^4 \left(u'\right)^2 \left(z''\right)^3 \left(z'\right)^2
-68 z^5 \left(u'\right)^4 \left(u''\right)^2 \left(z'\right)^2\nonumber \\
& -430 z^4 \left(u'\right)^4 \left(z''\right)^2 \left(z'\right)^2
-2 z^4 \left(u'\right)^6 z'' \left(z'\right)^2
+1152 z^5 \left(u'\right)^2 \left(u''\right)^2 z'' \left(z'\right)^2\nonumber \\
& +96 z^5 \left(u'\right)^5 u^{(3)} \left(z'\right)^2
-168 z^5 \left(u'\right)^3 z'' u^{(3)} \left(z'\right)^2
-168 z^5 \left(u'\right)^3 u'' z^{(3)} \left(z'\right)^2 \nonumber \\
& -20 z^5 \left(u'\right)^4 z^{(4)} \left(z'\right)^2
-256 z^6 \left(u'\right)^3 \left(u''\right)^3 z'
-768 z^5 \left(u'\right)^3 u'' \left(z''\right)^2 z'\nonumber \\
&
+252 z^6 \left(u'\right)^4 u'' u^{(3)} z'
-12 z^5\left(u'\right)^6 z^{(3)} z'
+252 z^5 \left(u'\right)^4 z'' z^{(3)} z'\nonumber \\
& 
-40 z^6 \left(u'\right)^5 u^{(4)} z'
+64 z^5 \left(u'\right)^4 \left(z''\right)^3
+22 z^6 \left(u'\right)^6 \left(u''\right)^2
-4 z^5 \left(u'\right)^6 \left(z''\right)^2\nonumber \\
&
+7 z^5 \left(u'\right)^8 z''
+192 z^6 \left(u'\right)^4 \left(u''\right)^2 z''
-24 z^6 \left(u'\right)^7 u^{(3)}
-84 z^6\left(u'\right)^5 z'' u^{(3)}\nonumber \\
&
-84 z^6 \left(u'\right)^5 u'' z^{(3)}
+20 z^6 \left(u'\right)^6 z^{(4)}\bigg).
\end{align}

\appendix

\section{Identities for Motzkin paths} \label{MotzkinIdentSect}
In this section we prove some identities that were used above.  The proofs treat $r$ and $s$ like formal parameters; that is, the lemma statements and proofs are equally valid with $r$ and $s$ replaced by $u$ and $z$, respectively.
\begin{lemma}[Path reflecting, changing between monic and normalized OP's]\label{path reflect lemma} Let $G$ be a smooth function.
\begin{align} 
[h^p] G(\mpmon) =& r^{p/2} [h^p] G(\mpsym) \\
=& r^{p/2} [h^{-p}] G(\mpsym) \\
=& r^{p} [h^{-p}] G(\mpmon)
\end{align}
\end{lemma}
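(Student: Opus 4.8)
The plan is to obtain all three equalities from two elementary substitutions in the contour integral that defines $[h^{\bullet}]$, together with the $h\mapsto h^{-1}$ symmetry of the symmetric monomial $\mpsym$. Throughout I would treat $r$ and $s$ as formal parameters, as announced at the start of this appendix, so that $G(\mpmon)$ and $G(\mpsym)$ are well-defined formal Laurent series in $h$; this is literal when $G$ is a polynomial (the only case needed in the paper), and otherwise one expands $G$ term by term in its argument.

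First I would record the elementary substitution identity: putting $h\mapsto\sqrt{r}\,h$ into $\mpmon$ produces exactly $\mpsym$, so $G(\mpsym) = G(\mpmon)\big|_{h\mapsto\sqrt{r}\,h}$ as Laurent series. Inserting this into $[h^p]G(\mpsym)=\frac{1}{2\pi i}\oint_0 G(\mpsym)\,h^{-p}\,\frac{\d h}{h}$ and changing the integration variable back rescales the extracted coefficient by the appropriate power of $\sqrt{r}$; tracking that power gives the first equality. For the middle equality I would use that $\mpsym$ is invariant under $h\mapsto h^{-1}$, hence so is $G(\mpsym)$, so its Laurent coefficients are palindromic, $[h^p]G(\mpsym)=[h^{-p}]G(\mpsym)$ — equivalently, one substitutes $h\mapsto h^{-1}$ in the defining integral and accounts for the orientation reversal. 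For the third equality I would apply the first substitution once more, now at exponent $-p$, relating $[h^{-p}]G(\mpsym)$ to $[h^{-p}]G(\mpmon)$ by another power of $\sqrt{r}$, and then chain the three relations together, collecting the powers of $r$.

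There is no real obstacle here; the content is entirely bookkeeping. The only points that need a little care are (i) confirming that $G(\mpmon)$ genuinely defines a Laurent series, so that the term-by-term substitutions are legitimate — handled by the formal-parameter convention, or simply by taking $G$ polynomial — and (ii) keeping the exponents of $r$ and the orientation conventions straight in the $h\mapsto\sqrt{r}\,h$ and $h\mapsto h^{-1}$ substitutions so that the three stated factors come out consistently. Once the first two equalities are in hand the third is immediate, and the companion path identities invoked elsewhere (path lowering, derivative swapping, integration by parts) reduce to the same change-of-variables bookkeeping.
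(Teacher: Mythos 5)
Your approach is exactly the intended one — the paper's own proof is just the one-line ``an easy direct calculation,'' and the two ingredients you name (the substitution $h\mapsto\sqrt{r}\,h$, which turns $\mpmon$ into $\mpsym$, and the $h\mapsto h^{-1}$ invariance of $\mpsym$) are precisely that calculation. The formal-parameter caveat about $G(\mpmon)$ being a genuine Laurent series is also fine.

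However, the one substantive step is the exponent bookkeeping, and that is exactly the step you defer with ``tracking that power gives the first equality'' — and if you actually carry it out, it does not give the printed factors. Writing $G(\mpmon)=\sum_q c_q h^q$, the substitution $h\mapsto\sqrt{r}\,h$ gives $G(\mpsym)=\sum_q c_q r^{q/2}h^q$, hence $[h^p]G(\mpsym)=r^{p/2}[h^p]G(\mpmon)$, i.e. $[h^p]G(\mpmon)=r^{-p/2}[h^p]G(\mpsym)=r^{-p/2}[h^{-p}]G(\mpsym)=r^{-p}[h^{-p}]G(\mpmon)$. A sanity check with $G(y)=y$, $p=1$ confirms this: $[h^1]\mpmon=1$ while $r^{1/2}[h^1]\mpsym=r$, so the lemma as printed cannot be taken at face value; the factors should be $r^{-p/2},r^{-p/2},r^{-p}$ (equivalently, replace $p$ by $-p$ on the left-hand side). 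The corrected form is the one actually invoked elsewhere in the paper — e.g. in the proof of the path raising/lowering lemma, where $[h^{p-1}]G(\mpmon)=r^{(1-p)/2}[h^{p-1}]G(\mpsym)$ is used, and in the unwinding lemma, where $[h^{-1}]V^{(m+2)}(Y)=z\,[h^{1}]V^{(m+2)}(Y)$ is needed. So your proof outline is the right one, but as written it asserts the wrong outcome of the only computation that matters; do the bookkeeping explicitly and note the sign correction to the statement.
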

\begin{proof} An easy direct calculation.
\end{proof}

\begin{lemma}[Path raising/lowering identity]  \label{pathLoweringLemma} Let $G$ be a smooth function and $p$ be an integer.  Then we have the following identities, which correspond to lowering and raising the endpoint of a Motzkin path.
\begin{align}
\partial_s [h^{p}] G(\mpsym) =& r^{p/2} \partial_{r} r^{1/2-p/2} [h^{p-1}]G(\mpsym)  \label{path lowering} \\
\partial_{s}[h^p]G(\mpsym) =& r^{-p/2} \partial_{r} r^{p/2+1/2} [h^{p+1}]G(\mpsym) \label{path raising}
\end{align}
Iterating the above identities allows the right endpoint of a Motzkin path to be shifted to zero by
\begin{align}
\partial_{s}^{|p|} [h^{p}] G(\mpsym) =& r^{|p|/2} \partial_{r}^{|p|} [h^0]G(\mpsym) \label{path zeroing}
\end{align}
\end{lemma}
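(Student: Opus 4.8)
The plan is to reduce all three identities to a single residue identity, treating $r$ and $s$ as formal parameters as the preamble of this appendix allows. Write $Y=\mpsym$ and record the elementary rules $\partial_s Y = 1$, $\partial_r Y = (h+h^{-1})/(2\sqrt r)$, and $\partial_h Y = \sqrt r\,(1-h^{-2})$. Since $[h^q]$ is a contour integral in the variable $h$, it commutes with $\partial_s$ and $\partial_r$, so by the chain rule (writing $G'$ for the derivative of $G$ in its single argument) one gets $\partial_s[h^q]G(Y)=[h^q]G'(Y)$ and $\partial_r[h^q]G(Y)=\tfrac{1}{2\sqrt r}\bigl([h^{q-1}]G'(Y)+[h^{q+1}]G'(Y)\bigr)$, the latter because multiplying a Laurent series by $h^{\pm1}$ shifts its coefficient index by $\mp1$.

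The only non-formal ingredient is the vanishing of the residue of a derivative: for any Laurent series $F(h)$ one has $[h^{q}]\partial_h F=(q+1)[h^{q+1}]F$ (integration by parts on a closed contour). Applying this with $F=G(Y)$ and $q=p-1$, and substituting $\partial_h G(Y)=\sqrt r\,(1-h^{-2})G'(Y)$, yields the master identity
\begin{align}
p\,[h^p]G(Y)=\sqrt r\,\bigl([h^{p-1}]G'(Y)-[h^{p+1}]G'(Y)\bigr), \label{masterIBP}
\end{align}
valid for every integer $p$; this is the instance of the ``integration by parts'' lemma that we need.

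With (\ref{masterIBP}) in hand, (\ref{path lowering}) and (\ref{path raising}) are immediate. For (\ref{path lowering}) I expand the right-hand side by the product rule on the outer $\partial_r$, replace $\partial_r[h^{p-1}]G(Y)$ by the formula from the first paragraph, multiply through by $r^{p/2}$, and collect terms; after cancellation the claim becomes exactly (\ref{masterIBP}) with $p$ replaced by $p-1$. Identity (\ref{path raising}) is the same computation with $p$ replaced by $p+1$, or alternatively follows from (\ref{path lowering}) via the $h\leftrightarrow h^{-1}$ symmetry $Y(h)=Y(h^{-1})$, which gives $[h^p]G(Y)=[h^{-p}]G(Y)$. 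Finally (\ref{path zeroing}) follows by iteration: apply (\ref{path lowering}) $|p|$ times when $p\ge0$ and (\ref{path raising}) $|p|$ times when $p<0$ (or use the symmetry to reduce to the case $p\ge0$); since $\partial_s$ commutes with $\partial_r$ and with multiplication by powers of $r$, the nested half-integer powers of $r$ telescope and one is left with $\partial_s^{|p|}[h^p]G(Y)=r^{|p|/2}\partial_r^{|p|}[h^0]G(Y)$.

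I expect no substantive obstacle: the entire content is the master identity (\ref{masterIBP}), and the only things requiring care are the signs coming from $\partial_h Y=\sqrt r\,(1-h^{-2})$ in its derivation and the bookkeeping of the telescoping $r$-exponents in the iteration. A marginally cleaner route is to pass first to the monic form $Y=\mpmon$ using the path reflecting lemma \ref{path reflect lemma}, since there $\partial_r Y=h^{-1}$, but this is cosmetic.
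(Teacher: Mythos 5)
Your proposal is correct, and the reductions you sketch do go through: expanding the outer $\partial_r$ by the Leibniz rule, using $\partial_r[h^{q}]G = \tfrac{1}{2\sqrt r}\bigl([h^{q-1}]G'+[h^{q+1}]G'\bigr)$, and cancelling does leave exactly your master identity at index $p-1$ (resp.\ $p+1$), and the telescoping of the $r$-powers in the iteration for (\ref{path zeroing}) is as you describe since $\partial_s$ commutes with everything in the $r$-variable. Your route differs from the paper's in one substantive way: the paper proves (\ref{path lowering}) by first passing to the monic combination $h+s+rh^{-1}$ via the path-reflecting lemma \ref{path reflect lemma}, where $\partial_r$ acts (after the chain rule) simply as multiplication by $h^{-1}$, so the prefactors $r^{p/2}$, $r^{(1-p)/2}$ are absorbed by the change of normalization and the coefficient index shifts from $p-1$ to $p$ with no Leibniz term at all; it then obtains (\ref{path raising}) from (\ref{path lowering}) exactly as in your alternative, via $[h^p]G=[h^{-p}]G$ and $p\mapsto -p$, and leaves the iteration for (\ref{path zeroing}) implicit. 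You instead stay in the symmetric variable throughout and pay for the Leibniz term by invoking the residue-of-a-derivative identity $[h^{q}]\partial_h F=(q+1)[h^{q+1}]F$; note that your master identity $p[h^p]G(Y)=\sqrt r\bigl([h^{p-1}]G'(Y)-[h^{p+1}]G'(Y)\bigr)$ is precisely the paper's standalone ``integration by parts'' lemma extended from powers to general smooth $G$, so your argument has the pleasant feature of deriving the raising/lowering identities and the integration-by-parts identity from a single source, and of making the $p<0$ case and the telescoping explicit. The paper's route buys a shorter, bookkeeping-free index shift (the simplification you yourself flag as available via lemma \ref{path reflect lemma}); yours buys self-containedness in the symmetric form and a unified treatment of the appendix identities. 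Either proof is acceptable.
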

\begin{proof}
First we prove (\ref{path lowering}) by direct calculation.
\newcommand{\Rsym}{\sqrt{r} h + s + \sqrt{r} h^{-1}}
\newcommand{\Rsk}{ h + s + r h^{-1}}
\begin{align}
[h^{p-1}] r^{p/2} \partial_{r} B^{1-p} G(\mpsym ) 
=& [h^{p-1}] r^{p/2} \partial_{r} G(\Rsk) \nonumber \\
=& [h^{p-1}] r^{p/2}  h^{-1}G'(\Rsk) \nonumber \\
=& [h^{p}] r^{p/2} G'(\Rsk) \nonumber \\
=& [h^p] G'(\Rsym) \nonumber \\
=& \partial_s [h^p] G(\Rsym) \nonumber
\end{align} 
The raising identity (\ref{path raising}) now follows from the lowering identity (\ref{path lowering}) by applying $[h^r]G(\mpsym) =[h^{-r}]G(\mpsym)$ to both sides, then substituting $p\rightarrow -p$.   
\end{proof}

\begin{lemma}[Derivative swapping] \label{pathDerivSwapLemma}
Let $G$ be a smooth function, then
\begin{align}
r \partial_{r}^2 [h^0](\mpsym) =& \left( \partial_{s}^2 - \partial_{r} \right) [h^0] ( \mpsym)
\end{align}
\end{lemma}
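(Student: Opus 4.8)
The plan is to reduce the identity to the path-zeroing case of lemma \ref{pathLoweringLemma} together with a single integration by parts in the contour variable $h$. Write $Y=\mpsym$, so that $\partial_s Y=1$ and $\partial_r Y=(h+h^{-1})/(2\sqrt{r})$; consequently, for any smooth $G$ we have $\partial_s[h^p]G(Y)=[h^p]G'(Y)$ and $\partial_r[h^p]G(Y)=\tfrac{1}{2\sqrt{r}}[h^p]\,(h+h^{-1})G'(Y)$. As in the rest of the appendix we treat $r,s$ as formal parameters and $G(Y),G'(Y),\dots$ as Laurent series in $h$.

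First I would apply the path-zeroing identity (\ref{path zeroing}) with $p=2$, which gives $r\,\partial_r^2[h^0]G(Y)=\partial_s^2[h^2]G(Y)=[h^2]G''(Y)$. Hence it suffices to prove $[h^2]G''(Y)=[h^0]G''(Y)-\partial_r[h^0]G(Y)$. For the last term, the expression for $\partial_r[h^p]$ above together with the path-reflecting lemma \ref{path reflect lemma} (which gives $[h^{-1}]G'(Y)=[h^1]G'(Y)$) yields $\partial_r[h^0]G(Y)=\tfrac{1}{2\sqrt{r}}\bigl([h^1]+[h^{-1}]\bigr)G'(Y)=\tfrac{1}{\sqrt{r}}[h^1]G'(Y)$. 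So the whole claim is reduced to the single identity
\[
\tfrac{1}{\sqrt{r}}\,[h^1]G'(Y)=[h^0]G''(Y)-[h^2]G''(Y).
\]

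The remaining step is the integration by parts. Writing $[h^1]G'(Y)=\tfrac{1}{2\pi i}\oint_0 G'(Y)\,h^{-2}\,\d h$ and using $h^{-2}=-\partial_h(h^{-1})$, integration by parts (with no boundary term, since $\oint_0\partial_h F(h)\,\d h=0$ for every Laurent series $F$) gives $[h^1]G'(Y)=\tfrac{1}{2\pi i}\oint_0 h^{-1}\partial_h\bigl[G'(Y)\bigr]\,\d h$. Since $\partial_h Y=\sqrt{r}(1-h^{-2})$, the chain rule gives $\partial_h\bigl[G'(Y)\bigr]=\sqrt{r}(1-h^{-2})G''(Y)$, hence $h^{-1}\partial_h\bigl[G'(Y)\bigr]=\sqrt{r}(h^{-1}-h^{-3})G''(Y)$; extracting the residue yields $[h^1]G'(Y)=\sqrt{r}\bigl([h^0]-[h^2]\bigr)G''(Y)$, which is exactly the required identity. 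Assembling the displays finishes the proof.

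I do not anticipate a real obstacle here: the only delicate points are keeping track of which coefficient $[h^p]$ is extracted at each stage, and the (immediate, in the formal setting) vanishing of the boundary term in the integration by parts. One could equivalently bypass lemma \ref{pathLoweringLemma} by expanding $r\,\partial_r^2 G(Y)$ directly via $\partial_r Y$, in which case the claim collapses to $[h^0](h-h^{-1})^2G''(Y)=-\tfrac{1}{\sqrt{r}}[h^0](h+h^{-1})G'(Y)$, provable by the same integration by parts; I would nonetheless keep the version above because it isolates the single new computation and reuses the already-established Motzkin-path identities.
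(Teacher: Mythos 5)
Your proof is correct, but it follows a different route from the paper's. The paper computes $\partial_{\sqrt{r}}^2[h^0]G(\mpsym)$ in two ways: via the chain rule in $h$ it equals $[h^0](h+h^{-1})^2G''(\mpsym)=\partial_s^2\bigl([h^{-2}]+2[h^0]+[h^2]\bigr)G(\mpsym)$, which the path zeroing identity (\ref{path zeroing}) converts to $\bigl(2r\partial_r^2+2\partial_s^2\bigr)[h^0]G(\mpsym)$; comparing with the operator identity $\partial_{\sqrt{r}}^2=4r\partial_r^2+2\partial_r$ and solving produces the lemma, so the extra $-\partial_r$ term there comes purely from the change of variable between $\sqrt{r}$ and $r$, with no contour manipulation. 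You instead use (\ref{path zeroing}) only once, to write $r\partial_r^2[h^0]G=[h^2]G''$, then handle the $\partial_r$ term via the reflecting lemma (\ref{path reflect lemma}) and close the gap with an explicit residue integration by parts giving $\tfrac{1}{\sqrt{r}}[h^1]G'=\bigl([h^0]-[h^2]\bigr)G''$; that last identity is in effect the smooth-$G$, $p=1$ case of the appendix's integration-by-parts lemma, which the paper's proof of this particular lemma never invokes. Both arguments rest on the same raising/lowering machinery; yours isolates the computation in a single transparent contour identity (and your closing remark correctly notes it can be made fully self-contained), while the paper's is shorter because the two evaluations of $\partial_{\sqrt{r}}^2$ do all the bookkeeping at once. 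All of your intermediate steps check out, including the reduction $r\partial_r^2[h^0]G=\partial_s^2[h^2]G$ and the formula $\partial_r[h^0]G=\tfrac{1}{\sqrt{r}}[h^1]G'$.
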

\begin{proof}
\begin{align}
\partial_{\sqrt{r}}^2 [h^0] G(\mpsym)
=& [h^0](h^2 + 2 + h^{-2}) G''(\mpsym) \nonumber \\
=& \partial_{s}^2 \left([h^{-2}]+ 2[h^0] + [h^2]\right) G(\mpsym)\nonumber \\
=& \left( r \partial_{r}^2 +2 \partial_{s}^2 + r \partial_{r}^2 \right) [h^0] G(\mpsym)\nonumber
\end{align}
To get the last line above, we used the path raising/lowering identity (\ref{path zeroing}).
The lemma now follows by using $\partial_{\sqrt{r}}^2 = 4 r \partial_{r}^2 + 2\partial_{r}$ on the left hand.
\end{proof}

\begin{lemma}[Integration by parts]
\begin{align}
\sqrt{r}\partial_s [h^p](h-h^{-1})(\sqrt{r} h + s +\sqrt{r} h^{-1})^{q} =& p [h^p](\sqrt{r} h + s +\sqrt{r} h^{-1})^{q}
\end{align}
\end{lemma}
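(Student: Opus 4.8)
The plan is to observe that $\sqrt{r}(h-h^{-1})$ is precisely $h\partial_h$ applied to the argument $Y := \mpsym$, so that the left-hand side is a total $h$-derivative in disguise; the identity then follows from the elementary fact $[h^p](h\partial_h F) = p\,[h^p]F$, valid for every (formal) Laurent series $F$ in $h$. This is also the reason for the name ``integration by parts'': writing $[h^p]$ as the contour integral $\frac{1}{2\pi i}\oint F(h)\,h^{-p}\,\frac{\d h}{h}$, the statement $[h^p](h\partial_h F) = p\,[h^p]F$ is exactly integration by parts against $h^{-p}$.

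First I would fix $Y = \mpsym$ and record the two elementary facts $\partial_s Y = 1$ and $h\partial_h Y = \sqrt{r}\,h - \sqrt{r}\,h^{-1} = \sqrt{r}(h-h^{-1})$. Hence $\sqrt{r}(h-h^{-1})\,Y^q = (h\partial_h Y)\,Y^q$, and since $\sqrt{r}$ and $h\partial_h Y$ do not depend on $s$ while $\partial_s$ acts only on the factor $Y^q$,
\begin{align}
\sqrt{r}\,\partial_s\!\left[(h-h^{-1})\,Y^q\right] = (h\partial_h Y)\cdot q\,Y^{q-1} = h\partial_h\!\left(Y^q\right),
\end{align}
the last equality being the chain rule $h\partial_h(Y^q) = q\,Y^{q-1}\,h\partial_h Y$ together with $\partial_s Y = 1$.

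Next I would apply the coefficient functional $[h^p]$ to both sides. It is linear and commutes with $\sqrt{r}$ and $\partial_s$ (neither operator involves $h$), so the left side becomes $\sqrt{r}\,\partial_s\,[h^p]\!\left((h-h^{-1})\,Y^q\right)$, which is the left side of the lemma. For the right side, expanding $Y^q = \sum_k a_k h^k$ (a Laurent polynomial, since $q\ge 0$ in every application) gives $h\partial_h(Y^q) = \sum_k k\,a_k\,h^k$, whence $[h^p]\bigl(h\partial_h(Y^q)\bigr) = p\,a_p = p\,[h^p](Y^q)$. Equating the two computations yields the claimed identity.

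I do not foresee any genuine obstacle here; the only points deserving a word of care are that $[h^p]$ commutes with the $h$-independent operators $\sqrt{r}$ and $\partial_s$, and that $Y^q$ is an honest Laurent polynomial in $h$ so that $[h^p](h\partial_h F) = p\,[h^p]F$ may be verified term by term.
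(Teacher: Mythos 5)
Your proof is correct and is essentially the paper's argument: the paper simply says to write $[h^p]$ as a contour integral and integrate by parts, which is exactly your observation that $\sqrt{r}\,\partial_s\bigl[(h-h^{-1})Y^q\bigr]=h\partial_h(Y^q)$ together with the elementary fact $[h^p](h\partial_h F)=p\,[h^p]F$. Your term-by-term verification is just the formal-series rendering of that same one-line integration-by-parts step, so nothing further is needed.
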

\begin{proof}
Express the series coefficient as a contour integral about $\infty$ and integrate by parts.
\end{proof}

\section{Table of string polynomials}
\begin{align} \nonumber
\begin{array}{cc|cc}
\lambda & \eta & P^{(a)}_{\lambda,\eta} & P^{(b)}_{\lambda,\eta}  \\[.75ex] \hline
& & & \\[-2ex]
\phi & \phi & 1 & 0\\[1.5ex]
1 & \phi & 0 & -\frac{1}{2}r \partial_{r} \\[1.5ex]
\phi & 1 & \frac{1}{2} \partial_{r}& 0 \\[1.5ex]
2 & \phi & \frac{1}{6} r \partial_s \partial_{r}& \frac{1}{6} r \partial_{s}^2 +\frac{1}{12}r \partial_{r} \\[1.5ex]
1+1 &\phi & \frac{1}{12}r \partial_{s}^2 \partial_{r}
& \frac{1}{12} r \partial_{s}^3 + \frac{1}{12}r \partial_s \partial_{r} \\[1.5ex]
1 & 1 & \frac{1}{6}\partial_{s}^3 
& \frac{1}{6}r \partial_{s}^2 \partial_{r} \\[1.5ex]
\phi & 2 & \frac{1}{6} \partial_{s}^2 + \frac{1}{12} \partial_{r} & \frac{1}{6}r \partial_s \partial_{r} \\[1.5ex]
\phi & 1+1 & \frac{1}{12}r^{-1} \partial_{s}^2 -\frac{1}{12} r^{-1} \partial_{r} + \frac{1}{12}  \partial_{s}^2 \partial_{r} & \frac{1}{12}\partial_{s}^3 -\frac{1}{12} \partial_s \partial_{r} \\[1.5ex]
3 & \phi & 0 & -\frac{1}{12}r \partial_{s}^2 \\[1.5ex]
2+1 & \phi & 0 & -\frac{1}{6}r \partial_{s}^3 \\[1.5ex]
1+1+1 & \phi & 0 & -\frac{1}{24} r \partial_{s}^4 \\[1.5ex]
2 & 1 & \frac{1}{12} \partial_{s}^3 & -\frac{1}{12}r \partial_{s}^2 \partial_{r} \\[1.5ex]
1+1 & 1 & \frac{1}{24} \partial_{s}^4 & -\frac{1}{12} r \partial_{s}^3 \partial_{r} \\[1.5ex]
1 & 2 & \frac{1}{12} \partial_{s}^3 & -\frac{1}{12} r \partial_{s}^2 \partial_{r} \\[1.5ex]
1 & 1+1 & \frac{1}{12} \partial_{s}^3 \partial_{r} & -\frac{1}{24} \partial_{s}^4 + \frac{1}{24} \partial_{s}^2 \partial_{r} \\[1.5ex]
\phi & 3 & \frac{1}{12} \partial_{s}^2 & 0 \\[1.5ex]
\phi & 2+1 & \frac{1}{6} \partial_{s}^2 \partial_{r}  & 0 \\[1.5ex]
\phi & 1+1+1 & \frac{1}{24}r^{-1} \partial_{s}^4 - \frac{1}{24} r^{-1} \partial_{s}^2 \partial_{r} & 0
\end{array}
\end{align}




\end{document}